\newcommand{\EE}{\ensuremath{\mathbb{E}}}
\newcommand{\A}{\mathcal{A}}
\newcommand{\para}[1]{\smallskip\noindent\textbf{#1}}
\newtheorem{defn}{Definition}
\newtheorem{claim}{Claim}
\newtheorem{lemma}{Lemma}
\newtheorem{theorem}{Theorem}
\date{}
\begin{document}

\title{A Game Theoretical Analysis of Non-Linear Blockchain System}



\author[1]{Lin Chen}
\author[2]{Lei Xu}
\author[3]{Zhimin gao}
\author[1]{Ahmed Imtiaz Sunny}
\author[4]{Keshav Kasichainula}
\author[4]{Weidong Shi}

\affil[1]{
  Texas Tech University in Lubbock, Texas, United States of America\protect\\
  \texttt{chenlin198662@gmail.com,ahmed.sunny@ttu.edu}}
\affil[2]{
  University of Texas Rio Grande Valley in Edinburg,Texas United States of America\protect\\
  \texttt{xuleimath@gmail.com}}
\affil[3]{
  Auburn University of Montgomery in Montgomery,Alabama, United States of America\protect\\
  \texttt{mtion@msn.com}}
  \affil[4]{
  University of Houston in Houston,Texas, United States of America\protect\\
  \texttt{\{kkasichainula,wshi3\}@uh.edu}
  }
\maketitle









\begin{abstract}
    Recent advances in the blockchain research have been made in two important directions. One is refined resilience analysis utilizing game theory to study the consequences of selfish behavior of users (miners), and the other is the extension from a linear (chain) structure to a non-linear (graphical) structure for performance improvements, such as IOTA and Graphcoin. 
The first question that comes to mind is what improvements that a blockchain system would see by leveraging these new advances.
    In this paper, we consider three major properties for a blockchain system: $\alpha$-partial verification, scalability, and finality-duration. We establish a formal framework and prove that no blockchain system can achieve $\alpha$-partial verification for any fixed constant $\alpha$, high scalability, and low finality-duration simultaneously. We observe that classical blockchain systems like Bitcoin achieves full verification ($\alpha=1$) and low finality-duration, Ethereum 2.0 Sharding achieves low finality-duration and high scalability. We are interested in whether it is possible to  partially satisfy the three properties.

    
\end{abstract}





         





\maketitle

\section{Introduction}\label{sec-introduction}

Blockchain is a decentralized ledger that provides a system for self-interested parties to carry out transactions without a fully trusted central system. As such, there is no centralized party who can compute an optimal solution or a Nash equilibrium in advance and then tell each user to take certain actions, as this will violate the foundations of blockchain systems.
The basic idea behind this is that, instead of having trust in a centralized system or any other specific participant, each participant chooses to trust the majority of the participants and accepts the outcome achieved through consensus among them. \cite{nakamoto2008bitcoin}

One major reason that hinders the adoption of blockchain is scalability~\cite{vukolic2015quest}. For example, Bitcoin network can only process less than 10 transactions per second on average~\cite{francca2015homomorphic}, while typical payment systems like Visa can process thousands of transactions per second.


Recently, a variety of approaches are proposed to address the scalability issue. Most of them follow the general framework of divide and conquer, e.g., Zilliqa~\cite{zilliqa}, Harmony~\cite{harmony}, and Ethereum 2.0~\cite{ethereum20}, and use a sharding scheme that allows transactions to be processed by a subgroup of nodes (a sharding committee). A sharding scheme usually has a critical issue in terms of resilience, as the correctness of each transaction now solely depends on a subgroup of voters.  
Consequently, if common consensus protocols like Proof-of-Work (PoW) or Byzantine Fault Tolerance (BFT) is used within subgroups, then the fraction of malicious nodes within {\em every} subgroup cannot exceed 1/2 or 1/3, which is a significantly stronger assumption than that of a standard blockchain system. A typical blockchain system only requires that the fraction of malicious nodes do not exceed 1/2 or 1/3 of all nodes. We remark that both Harmony and Ethereum 2.0 claim that if subgroups are generated in a perfect randomized way, then the percentage of honest nodes within each subgroup is almost the same as their percentage in the whole group of nodes; however, this requires a perfect distributed random number generation as a separate procedure, which brings an additional assumption on the security of this additional procedure. 

To guarantee that every transaction is correctly executed by only relying on the standard assumption that the majority of the nodes are honest, we must require every transaction to be verified by all the nodes. 
Consequently, it is a straightforward question of whether scalability is achievable at all, as it appears that any divide and conquer based solution would inevitably reduce the total number of verifications received by a transaction. A "non-linear" blockchain structure recently introduced by IOTA~\cite{popov2016tangle} and Graphcoin~\cite{boyen2017blockchain}, can neglect such limitations. The basic idea is to allow blocks to be connected as a directed acyclic graph (DAG) instead of a chain. Such a non-linear structure implements a divide and conquer approach implicitly by allowing multiple blocks to be appended simultaneously, as a general graph can be extended in multiple directions. Meanwhile, if we treat different growing directions as soft forks or branches, then it is possible  (depending on system parameters) that they can ``merge" again in the future (see the following figure{~\ref{fig-dag}}, where the sequences of blocks that follow block $A$ and $B$ meet at block $C$). Therefore,  a node that tries to append a new block is required to verify a few previous blocks. So, there is a possibility that a block may still be verified by all the nodes, albeit the delay of such verification.

On a high level, there are three crucial properties involved in a general blockchain system: verification, scalability, and finality-duration. 
In a nutshell, $\alpha$-partial verification requires every transaction to be verified by at least $\alpha$ fraction of all the nodes (which thus ensures resilience under the standard assumption that the majority of the nodes follow the protocol); scalability 
means the system throughput, or the total number of transactions executed per unit of time, is proportional to the total number of participating nodes; and finality-duration 
means the delay in reaching consensus on the correctness of the execution of each transaction. We give a precise definition in Section~\ref{sec-problem}.

Classical blockchain systems like Bitcoin achieves full verification and low finality-duration, but not scalability. This is because Bitcoin requires every block, and hence the transactions within a block, to be verified by all the nodes; meanwhile, it has a constant finality-duration because every block is finalized after a constant number of blocks 
are appended afterward. However, it does not scale, as the increase in the number of nodes does not allow the system to handle more transactions per unit of time, which has been pointed out in many prior papers \cite{Solidity_ether,nakamoto2008bitcoin,sasson2014zerocash,SompolinskyLZ16}.On the other hand, blockchain systems like { Harmony~\cite{harmony} and Ethereum 2.0~\cite{ethereum20}} achieve constant finality-duration and scalability, but not full verification. For example, the sharding scheme used in Ethereum 2.0 allows a block to be verified within a shard (which is a subset of nodes). 

 In this paper, we provide a view on the relationships between full verification (or more precisely, $\alpha$-partial verification for any constant $\alpha$), low finality-duration, and scalability. More precisely: 

\para{Our contributions.} 
{We show that it is impossible to achieve full verification, low finality-duration and scalability simultaneously.}

Given the fact that: 
\begin{inparaenum}[(i)]
	\item Bitcoin achieves full verification and low (asymptotically constant) finality-duration, but not scalability; and 
	\item Ethereum 2.0 achieves low (constant) finality-duration and scalability, but not full verification,
\end{inparaenum}
it is natural to explore to what extent a blockchain system can be designed considering the different trade-off scenarios.  In particular, does there exist a blockchain system that satisfies both full verification and scalability/finality-duration? Is it possible to have a system that partially satisfies all of the three properties? We give an affirmative answer in this paper. In particular, we prove that by adopting a non-linear blockchain system 
and employing a game-theoretical analysis, we can construct a system which achieves full verification and a trade-off between scalability and finality-duration. Informally speaking, the following properties hold simultaneously for the constructed blockchain system: 
\begin{enumerate}[(i)]
    \item $O(s)$ new blocks are generated per unit of time on average;
    \item after $O(s\log s)$ units of time, with a very high probability, each block will be verified by all users in the systems.
\end{enumerate}
Here $s$ is a system parameter that can be set suitably at the genesis block. When $s=1$, the non-linear system degenerates to a linear system with a fixed block generation rate that is independent of the nodes in the system, while the delay which is an indication of finality duration, is a constant. This coincides with the classical Bitcoin system. Conversely, $s$ can be as high as $O(m)$ where $m$ is the number of nodes. In this case, the system is fully scalable, albeit that only a sufficiently long delay ($O(s\log s)$) can ensure full verification. However, if we set $s$ to be $O(m)$ to enforce scalability and meanwhile enforce the delay to be some constant instead of $O(s\log s)$, then full verification cannot be guaranteed. 



We remark that the big-$O$ notation in our statements hides a constant which is roughly the average time for a block to be generated, that is, we measure the delay in terms of the number of blocks; therefore, Bitcoin is considered as low (asymptotically constant) finality-duration as the delay is constant blocks. Our result does not conflict with prior researches that complain about the ``high'' finality-duration of Bitcoin because of the long time it takes to generate a single block \cite{popov2016tangle, boyen2017blockchain}. The research that tries to decrease such a block generation time is parallel to this paper. For example, if a lighter version of PoW can be used in the existing Bitcoin system, then it can also be used directly in our non-linear blockchain system, while our impossibility result, as well as the trade-off between finality-duration and scalability, remain the same.

\section{Related Work}
The study of e-cash systems dates back to 1983~\cite{chaum1983blind,sander1999auditable}. However, all such systems require a centrally or quasi-centrally controlling authority. 
A well-known exception, Bitcoin, was introduced by Nakamoto~\cite{nakamoto2008bitcoin} in 2008, which uses a public ledger known as a {\em blockchain} to record transactions carried out between users. Following this line of research, various alternative blockchain-based transaction systems are proposed~\cite{Solidity_ether,miers2013zerocoin,sasson2014zerocash,SompolinskyLZ16}, further improving the performance and security of Bitcoin as well as extending the system to deal with applications beyond transactions (e.g., smart contracts). {In \cite{sompolinsky2015secure}, Sompolinsky and Zohar have introduced an alternative to the longest chain that allows more transactions to take place at a lower cost.} { Recently,  Sompolinsky et al. \cite{sompolinsky2020phantom} have shown faster block generation by generalizing blockchain to a direct acyclic graph of blocks.  Blockchain-based consensus protocols Fantômette and Avalanche that rely on blockDAG were proposed in \cite{azouvi2018betting,rocket2018snowflake}.}  We refer the readers to several surveys on blockchain systems~\cite{tschorsch2016bitcoin,conti2018survey,khalilov2018survey,salman2018security,ali2019applications,yang2019integrated,liu2019survey,bano2019sok,garay2020sok}. In particular, \cite{tschorsch2016bitcoin} provides a comprehensive introduction to the bitcoin network,{\cite{bano2019sok,garay2020sok} focus on the systematized study of the blockchain consensus protocols},
\cite{conti2018survey,khalilov2018survey,salman2018security} focus on the security and privacy results on blockchain, \cite{ali2019applications} focuses on the applications of blockchain. The most relevant survey to this paper is \cite{liu2019survey}, which summarizes recent results on game-theoretical studies of blockchain.  However, most of the existing game theoretical research primarily focuses on the traditional linear blockchain system, only a very recent paper by Popov et al.~\cite{popov2019equilibria} gives the first game-theoretical analysis of IOTA. Their result, however, does not establish the trade-off between scalability and finality-duration.

\subsection{Classical and Non-linear Blockchain}\label{subsec:background}
\para{Chain-structured blockchain.}
Most of the existing blockchain systems, e.g., Bitcoin, Ethereum, Hyperledger, follow the classical structure where blocks form a chain as illustrated by Fig~\ref{fig-linear}. 

\begin{figure}[h!]
    \centering
	\includegraphics[width=0.50\textwidth]{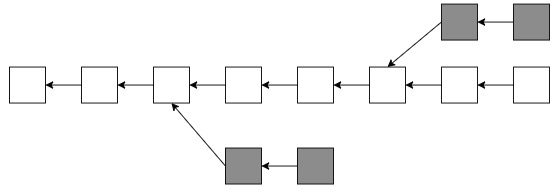}
	\caption{Chain-structured blockchain. The chain grows from left to right. White squares form the main chain, and gray squares form the side chains that are discarded eventually.}
	\label{fig-linear}
\end{figure}
\para{Non-linear (graph-structured) blockchain.}
Popov introduced the concept of {\em tangle}~\cite{popov2016tangle} which allows a blockchain to adopt a directed acyclic graph (DAG) architecture. We summarize the abstract model of a non-linear blockchain in Section~\ref{sec-problem}. We briefly review IOTA, which is the most well-known non-linear blockchain system so far. On a high level, IOTA allows each transaction be an individual node linked in the distributed ledger. We may interpret a transaction as a block in such a system. In the tangle, each user needs to select one transaction from the pool as well as two previous blocks (transactions) in the system. The user verifies these two transactions and mines a new block referring to them. Then this new block (transaction) is broadcasted to the tangle network.  \figurename~\ref{fig-dag} gives a simple example of a non-linear blockchain. 

\begin{figure}[h!]
    \centering
	\includegraphics[width=0.50\textwidth]{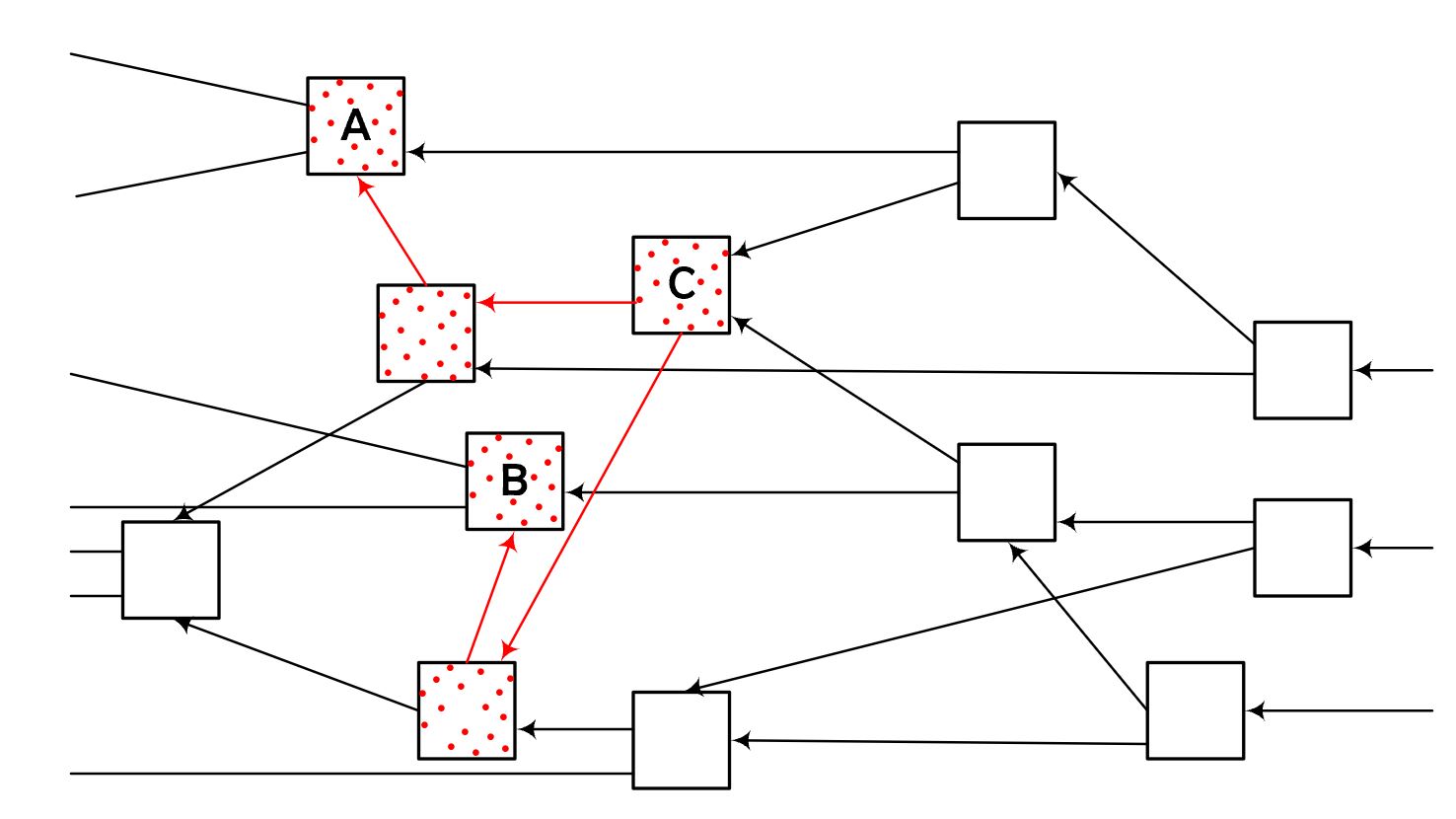}
	\caption{A non-linear blockchain. White squares are verified transactions/blocks.}
	\label{fig-dag}
\end{figure}




\section{The Abstract Model}\label{sec-problem}
We describe an abstract model of a non-linear blockchain which is general enough to incorporate existing well-known non-linear blockchain systems like IOTA and Graphcoin.

A non-linear blockchain \textit{NLB} is defined by a quadruple 
$$\textit{NLB}: (\mathcal{A}, \mathcal{C}, \mathcal{R}, \mathcal{E}), \text{where}$$ 
\begin{compactitem}
    \item $\mathcal{A}$ defines the rules of building and adding a new block to the blockchain. Since we are considering non-linear blockchain, $\mathcal{A}$ allows multiple blocks to be added simultaneously. 
    \item $\mathcal{C}$ defines the way to check a block, including validity verification, such as whether the block has the correct format and whether transactions included in the block are valid, and whether the block is finalized.
    \item $\mathcal{R}$ defines the way how the award is assigned to a user who adds a new block to the DAG. A \textit{NLB} needs to encourage users to participate in the construction of the blockchain by giving rewards to those who add new blocks.
    \item $\mathcal{E}$ defines the rules to eliminate conflicting blocks. Similar to a linear blockchain, it is possible that multiple participants have different local copies of the blockchain, and $\mathcal{E}$ determines which version should be kept.
\end{compactitem}

{Next, we provide formal definitions of the three metrics of a blockchain system that we mentioned earlier.}

\begin{defn}[Partial verification]
	For any fixed $\alpha\in (0,1]$, a blockchain system satisfies the property of $\alpha$-partial verification if every block is verified by at least $\alpha$ fraction of the total nodes in the system before it is finalized.
\end{defn}
In particular, if $\alpha=1$, then each block is verified by all the nodes and we call it {\it full verification}.
If a blockchain system satisfies full verification, then resilience follows directly from standard assumptions on the percentage of honest nodes among all nodes, e.g., if the blockchain uses PoW or BFT as the consensus protocol, then it is resilient, once the majority or 2/3 of nodes follow the protocol. This is also true for $\alpha$-partial verification if the $\alpha$-fraction of the nodes are randomly selected from all nodes. Indeed, many recently developed blockchain systems that claim to achieve scalability, e.g., Ethereum 2.0~\cite{ethereum20} implements such an idea by letting a subset of nodes (i.e., a {\it Shard}) compute and verify a smart contract. The size of such a subset divided by the total number of nodes gives the percentage $\alpha$. 

\begin{defn}[Scalability]
	The throughput of a blockchain system is the number of blocks $n_b$ that can be added to the system in a fixed time. A blockchain system scales with the number of nodes $m$ in the system if $n_b\to \infty$ when $m\to\infty$. Particularly, a blockchain system fully scales with the number of nodes $m$ if $n_b=\Omega(m)$. 
\end{defn}
It should be clear that the definition of scalability or full scalability does not depend on the length of the time period chosen for throughput. It captures the possibility of speeding up blockchain generation with more participating nodes; consequently, classical blockchain systems like Bitcoin does not scale.  

\begin{defn}[Finality-duration]
	The finality-duration of a blockchain system is the time difference between the time point when a block is appended and the time point when a block receives full verification.
\end{defn}
We say the finality-duration of a blockchain system is low (or constant) if the finality-duration is independent of the nodes in the system; consequently, classical blockchain systems like Bitcoin has a low (asymptotically constant) finality-duration because after a fixed number of blocks are appended, all the nodes start following the main chain, thus blocks on the main chain will receive full verification.


\section{Impossibility Result}
\begin{theorem}\label{thm:imp}
There does not exist a blockchain system that simultaneously satisfies
\begin{inparaenum}[(i)]
    \item scalability;
    \item low finality-duration; and
    \item $\alpha$-partial verification for an arbitrary constant $\alpha\in (0,1]$.
\end{inparaenum}
\end{theorem}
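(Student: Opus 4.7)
The plan is to derive a contradiction via a capacity-counting argument that balances the total verification work supplied by all nodes against the verification work demanded by newly appended blocks. The key modeling assumption I would make explicit is that each individual node has a bounded verification throughput: there exists an absolute constant $C$ such that any single node can verify at most $C$ blocks per unit of time, reflecting the fact that a node's computational resources do not grow with the system size $m$. Without such a bound the impossibility genuinely fails, so formalizing this in terms of the rules $\mathcal{C}$ is a necessary modeling step.

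First, I would fix the parameters that the three hypothesized properties impose. By low finality-duration, there is a constant $T$, independent of $m$, such that every appended block receives enough verifications to be finalized within $T$ time units of being appended. Under $\alpha$-partial verification, this means $\alpha m$ distinct nodes must each verify that block within this window. By scalability, the block-appending rate $n_b$ satisfies $n_b \to \infty$ as $m \to \infty$ (and $n_b = \Omega(m)$ in the full-scalability case).

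Next, I would set up the accounting in steady state. In any time window of length $T$, the total number of (node, block) verification events that can possibly occur is at most $m \cdot C \cdot T$, since each of the $m$ nodes performs at most $CT$ verifications in that window. On the other hand, the number of blocks whose finalization deadline falls in (any translate of) that window is at least $n_b T$, and each such block must accumulate at least $\alpha m$ distinct node-verifications. Hence the demand for verification events in the window is at least $n_b T \cdot \alpha m$. Comparing supply and demand gives $n_b \alpha m \le m C$, that is, $n_b \le C/\alpha$, which is a constant in $m$ for any fixed $\alpha \in (0,1]$. This contradicts $n_b \to \infty$, completing the impossibility.

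The main obstacle I expect is making the bounded-verification-rate assumption airtight so that the counting cannot be cheated. One has to exclude, for example, "rubber-stamping" a block without performing its check, batching that verifies many blocks with a single computation, or using partial hearsay (node $v$ signs off on $b$ merely because some neighbor did) as a substitute for independent verification. I would close these loopholes by insisting, as a property of $\mathcal{C}$, that a verification of block $b$ by node $v$ requires $v$ to perform a nontrivial constant amount of independent local work whose correctness is what the resilience argument relies on; any weakening of this reduces the effective $\alpha$ and feeds right back into the counting bound. A secondary subtlety is reconciling the stated definition of finality-duration (phrased in terms of full verification) with the $\alpha$-partial-verification hypothesis; I would interpret finality-duration as the delay until a block acquires the verifications required by the system's own finalization rule, so the counting argument goes through with $\alpha m$ in place of $m$.
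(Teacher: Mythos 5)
Your proposal is correct and follows essentially the same route as the paper: a supply-versus-demand counting argument on verification events, using the facts that each node's verification rate is a constant independent of $m$, that low finality-duration caps the time window at a constant, and that $\alpha$-partial verification forces a demand of $\alpha m$ verifications per block, yielding $n_b \le C/\alpha$ and contradicting $n_b \to \infty$. The paper phrases this per-node (an arbitrary node $x$ with verification time $\tau_x$ cannot complete its $\alpha n_b$ share within the constant delay $c_0$) rather than as a global ledger, but the argument is the same; your additional care about batching and rubber-stamping is a reasonable modeling clarification the paper leaves implicit.
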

The proof follows from a counting argument on the total number of verifications. 
\begin{proof}
    Suppose, on the contrary, that there exists such a block-chain. Then by definition, every block or transaction will receive verifications from at least $\alpha$ fraction of the nodes within a constant delay. Let $c_0$ be the constant delay. Consider an arbitrary node $x$ and let $\tau_x$ be the fixed time it takes for node $x$ to perform one verification. Let the throughput of the blockchain be $n_b$, then by definition of scalability, $n_b=n_b(m)\to \infty$ when $m\to\infty$. Note that all the $n_b$ blocks generated shall be verified by at least $\alpha$ fraction of nodes within the delay of $c_0$, which means on average, every node should perform $\alpha n_b$ verifications within $c_0$. However, node $x$ can only perform $c_0/\tau_x$ verifications, which is a constant. Since $n_b\to\infty$, when $m$ is sufficiently large, $\alpha n_b>c_0/\tau_x$ for any fixed $\alpha$. Therefore, it is impossible for an arbitrary node $x$ to complete all the verifications. Hence, the three properties, scalability, low finality-duration and $\alpha$-partial verifications, cannot be satisfied simultaneously.
\end{proof}
\noindent\textbf{Remark.} If $\alpha$ is not a constant, e.g., $\alpha=10/N$ where $N$ is the total number of nodes, then $\alpha\to 0$ and it is indeed possible to guarantee scalability and low finality-duration simultaneously.

\section{Satisfying Impossibility-Triangle in a Liquid Way}
As we have mentioned before, Bitcoin achieves full verification and low finality-duration at the cost of scalability, Ethereum 2.0 achieves scalability and low finality-duration, but cannot guarantee a constant $\alpha$ for partial verification. These systems obey our impossibility triangle by conforming to two properties and disregarding the third one. But what if we want all of them in partial? More precisely, can we have a more fine-grained system that can can smoothly transform from one extreme to another by controlling a simple system parameter? In this section, we construct a non-linear blockchain system, which builds upon the basic structure of IOTA, and show that it achieves full verification, and meanwhile has a trade-off between scalability and finality-duration. In particular, a larger scalability implies longer finality-duration, and this can be controlled through a single system parameter $\Delta$ as we will define later. 

\subsection{Non-linear Blockchain (NLB) Construction}
We first propose a concrete construction of NLB that achieves both security and scalability under the agent model (Based on the agent model, every participant is an agent who tries to maximise his/her profit).
Without loss of generality, we assume that each block only includes one transaction. In the following, we broaden the terms and use them interchangeably.
We first define some concepts. 
\begin{defn}[Block distance, descendant, and ancestor]
	Given two blocks $A$ and $B$, we define the distance between the two blocks as the length of the shortest directed path from $A$ to $B$, which is denoted as $d(A,B)$. If there is no such a directed path, we define $d(A,B)=\infty$. 
	If $0<d(A,B)<\infty$, we say $B$ is a descendant of $A$, and $A$ is an ancestor of $B$.
	For a block $B$ and each $1\le k\le \ell$, let $\textit{Anc}(B,k)=\{X|d(X,B)=k\}$ and $\textit{Des}(B,k)=\{d(B,X)=k\}$, where $\ell$ is a given parameter.
\end{defn}

The new NLB is constructed as follows:
\begin{itemize}
    \item $\mathcal{A}$. The new \textit{NLB} assumes that there is a pool of new transactions from which a user can select one to construct a new block, which refers to two previous blocks\footnote{Our analysis in this paper also works if a new block refers to any fixed constant (greater than or equal to 2) number of blocks. For ease of presentation, we take this number to be 2 throughout this paper.}. The user then does lightweight mining to fix this information in the newly constructed block. 
    {Lightweight mining is a common approach used in blockchain systems supporting lightweight users (see, e.g.~\cite{xu2017epbc}). It means that the user needs shorter computation time, compared with the standard mining process, to find a value for the block that makes it a valid block. This is usually done by a loose requirement on the hash result of the block together with the mined value. 
    }
    Suppose that the newly built block is $B$, the user also verifies blocks in $\textit{Anc}(B,k), 1 \leq k \leq \ell$, where $\ell$ is a pre-defined system parameter that determines how many previous blocks the producer of a new block should verify.  
	\item $\mathcal{C}$. To check a block $B$, the algorithm first checks whether the block format is correct, including the verification of the mining outcome. The algorithm also checks whether $B$ is finalized or not, which is determined by 
	$$\textit{num}_\textit{Anc} \leftarrow |\cup_{k=1}^\ell \textit{Anc}(B,k)|.$$
	If $\textit{num}_\textit{Anc}$ is larger than the system pre-defined threshold, $B$ is finalized.
	\item $\mathcal{R}$. Each block has a reward value and the system imposes an upper bound on the maximal reward offered by a transaction, so that the largest and smallest reward among transactions (and blocks) can differ by a factor at most $\Delta$. The producer of the new block also receives rewards from previous blocks. Specifically, each block is associated with a uniform verification cost $\textit{vrf}$, which is divided into $\ell$ parts such that $\textit{vrf}_1 < \textit{vrf}_2 < \cdots < \textit{vrf}_\ell$ and $\sum_{k=1}^\ell \textit{vrf}_k = \textit{vrf}$. For each $1 \leq k \leq \ell$, the producer of block $B$ gets reward $\textit{vrf}_k / |\textit{Des}(X,k)|$ for each $X \in \textit{Anc}(B,k)$. This means that the verification reward of $\textit{vrf}_k$ from block $X$ is evenly distributed among all descendants in $\textit{Des}(B,k)$. Note that the reward is not uniformly divided and it will be only collected when the new block is finalized.
	\item $\mathcal{E}$. The constructed NLB adopts the largest-weighted descendants principle (LWD) to eliminate disagreement, i.e., when there are blocks containing conflicting transactions, the one whose descendants have a largest total weight will be selected. Note that this will not prevent multiple non-conflict blocks from being added in parallel. 
	Formally, for each block $B$, let $\textit{Des}(B)=\cup_{h\ge 1} \{X|d(B,X)\}$. If there are two conflicting blocks $B$, $B'$ and $|\textit{Des}(B)|>|\textit{Des}(B')|$, then $B$ prevails, that is, users will abandon $B'$ together with all its descendants in the sense that a new block will not refer to any of these blocks. 
\end{itemize}

Note that when considering the private costs for the mining task of the agents, the only difference is that the profit of each block is no longer its reward, instead, the new profit of each agent should be the reward of each task subtracting the (agent-dependent) cost.
All results in our paper still hold if the largest profit  and smallest profit differ by at most $\Delta$ times for all agents. In our paper, the blockchain system is designed such that the largest and smallest reward can differ by at most $\Delta$ times. Incorporating costs of agents, however, this cannot be ensured by system design. Specifically, if an agent has an excessive private cost then the ratio can be unbounded. But in practice, it is plausible to assume that the private cost is usually small compared with the reward of blocks.


\subsection{Scalability and Finality-duration Analysis}

We first give a high-level summary of the workflow of the proposed NLB system. Transactions are generated over time and form a pool. Each transaction is associated with a distinct transaction reward and a fixed verification reward $\textit{vrf}$. Each time, a miner will select one transaction from the pool and append a block, which refers to two previous blocks. Here the miner needs to decide two things:
\begin{inparaenum}[(i)]
	\item which transaction to include, and 
	\item which two previous blocks to refer to. 
\end{inparaenum}
As we assume that miners are rational players, they will strategically make their decisions to maximize their profits, and this section is devoted to analyze the scalability and security of the system under an arbitrary Nash equilibrium.

We formalize the problem as follows. Let the pool consist of $n$ transactions, with the transaction reward being $p_1,p_2,\cdots,p_n$. Let $m$ be the number of miners, with computational powers being $u_1,u_2,\cdots,u_m$. As we mentioned, each miner will mine a new block by including one transaction from the pool. If multiple miners say, miners in the subset of $S$, all choose the same transaction, then they compete, and only one of them will succeed, and the probability that some miner $i\in S$ succeeds is $\frac{u_i}{\sum_{h\in S}u_h}$. If, however, all miners choose different transactions, then each of them can append a new block. 
In the following section, we will analyze the scalability and finality-duration of the constructed NLB separately.

\subsubsection{Scalability}\label{sec-solution}
For scalability, we are interested in how many different transactions from the pool can be selected by the miners simultaneously. Note that the more different transactions are chosen, the higher scalability is. When miners choose transactions simultaneously, we are considering the worst-case because if miners are selecting transactions at different times, later ones may be able to avoid conflicts with earlier ones. Let $n$ be the number of available transactions and $m$ be the number of miners, the following Theorem~\ref{thm:poa-ub} implies that the system is scalable even in the worst case such that when there are sufficiently many transactions, the throughput will be $O(m/\Delta)$ where $\Delta$ is a system parameter part of $\mathcal{R}$. By controlling $\Delta$, we can control the scalability of the system. In particular, when we set $\Delta$ to be a constant, the system becomes fully scalable with the number of nodes.  In the remainder of this section is devoted to proving Theorem~\ref{thm:poa-ub}. 
\begin{theorem}\label{thm:poa-ub}
	With probability at least $1-\max\{e^{-\Theta({m})},e^{-\Theta({n})}\}$, the number of blocks mined by $m$ miners in an arbitrary Nash equilibrium is at least $\min\{c_1m/\Delta,c_2n\}$ for some universal constants $c_1,c_2$.
\end{theorem}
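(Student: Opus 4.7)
The plan is to combine a Nash-equilibrium structural argument with a standard concentration inequality. I would first derive the key deviation inequality: if miner $i$ is mining transaction $t$ (in a pure equilibrium, or with positive probability in a mixed one), then miner $i$'s expected reward on $t$ must be at least what miner $i$ could secure by switching to an unused transaction $t'$, on which miner $i$ would collect the full reward $p_{t'}$. Since $p_t/p_{t'} \leq \Delta$ by construction of $\mathcal{R}$, the inequality $u_i p_t / \sum_{h \in S_t} u_h \geq p_{t'}$ forces $\sum_{h \in S_t} u_h \leq \Delta u_i$ for every $i \in S_t$, and therefore $\sum_{h \in S_t} u_h \leq \Delta \min_{i \in S_t} u_i$. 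Dividing out yields $|S_t| \leq \Delta$ whenever any transaction remains unused.

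With this load cap in hand, the counting step is immediate: if some transaction is unused, then the $m$ miners are partitioned into groups of size at most $\Delta$, yielding at least $m/\Delta$ distinct chosen transactions; otherwise every one of the $n$ transactions is chosen. Hence any pure Nash equilibrium already gives $\min\{m/\Delta, n\}$ distinct transactions deterministically, matching the claimed bound with $c_1 = c_2 = 1$ and no randomness at all.

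The probabilistic statement must therefore come from mixed equilibria and tie-breaking in the mining race. In a mixed NE the occupation count is a random variable, and I would lift the deviation inequality to its expected form: for each miner $i$ and each $t$ in the support of $\sigma_i$, the expected payoff $u_i p_t \cdot \mathbb{E}\bigl[1/\sum_{h \in S_t} u_h \mid i \text{ picks } t\bigr]$ must dominate the best alternative, which again caps the expected total computing power on $t$ at $O(\Delta u_i)$. Summing across miners and transactions then shows that the expected number of occupied transactions is $\Omega(\min\{m/\Delta, n\})$. I would close with a bounded-differences concentration argument: since changing one miner's choice perturbs the occupation count by at most $1$, McDiarmid's inequality yields a deviation probability of $\exp(-\Theta(m))$, while a parallel Chernoff-type bound over the $n$ transactions supplies the $\exp(-\Theta(n))$ factor, and the two combine into the stated $1-\max\{e^{-\Theta(m)}, e^{-\Theta(n)}\}$ probability.

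The most delicate obstacle will be the expected form of the deviation inequality, because $\mathbb{E}[1/\sum_h u_h]$ is not the same as $1/\mathbb{E}[\sum_h u_h]$; a careful conditioning argument, together with the guarantee from $\mathcal{R}$ that profits cannot spread by more than a factor $\Delta$, is needed to turn the per-transaction cap on expected load into a usable bound on the support of the NE. The pure-strategy case is essentially a one-line counting argument, so all the technical weight sits in propagating this load bound through the mixed-strategy expectation and then through the concentration step.
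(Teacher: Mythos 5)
Your overall architecture — an equilibrium deviation inequality that caps the load on any chosen transaction at $O(\Delta)$, followed by counting and concentration — is the same skeleton as the paper's proof, and your pure-strategy argument is correct. But two steps in the mixed-strategy case, which is where all the content of the theorem lives, have genuine gaps. First, the deviation inequality: in a mixed equilibrium there need not exist any transaction that is \emph{unused with probability one}, so "switch to an unused $t'$ and collect the full $p_{t'}$" is not an available deviation. The paper instead targets a transaction $j_2$ whose total probability mass $\sum_i \pi^{(i)}_{j_2}$ is below $1/2$, so that with probability at least $1/2$ the deviator faces no competition and secures roughly $p_{j_2}/2$; and on the other side it must lower-bound the competition the miner currently faces on $j_1$, which requires (i) choosing the deviating miner to be the one with the \emph{smallest} power among those supporting $j_1$, and (ii) a Chernoff bound showing the number of competitors concentrates near its mean $\ge 11\Delta$, combined with Jensen's inequality to control $\mathbb{E}[1/\sum_h u_h X^{(h)}_{j_1}]$ (your Claim-1-analogue). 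You correctly flag that $\mathbb{E}[1/\sum_h u_h] \neq 1/\mathbb{E}[\sum_h u_h]$ is the delicate point, but flagging it is not resolving it — this is the technical core of the paper's Lemma~\ref{lemma:individual-bound} and your proposal does not supply it.

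Second, the concentration step via McDiarmid does not deliver the claimed probability. The number of distinct selected transactions has bounded differences $1$ in the $m$ independent choices, so McDiarmid gives a failure probability of $\exp(-\Theta(t^2/m))$ for a deviation of $t$ below the mean; since the mean is only $\Theta(m/\Delta)$, you must take $t = \Theta(m/\Delta)$ and obtain $\exp(-\Theta(m/\Delta^2))$, which is far weaker than the stated $e^{-\Theta(m)}$ whenever $\Delta$ grows with $m$ (and vacuous for $\Delta = \Omega(\sqrt{m})$, precisely the regime where the system is not fully scalable but the theorem still claims a guarantee). The paper avoids this by a direct union bound over all candidate "small" supports $S$ of size $k$: the probability that all miners land in $S$ is $\prod_i \sum_{j \in S}\pi^{(i)}_j$, and the extremal combinatorial inequality of Lemma~\ref{lemma:ineq} (a convexity/rearrangement argument pushing the mass to the maximal per-transaction cap $12\Delta/m$) bounds the sum over all $S$ by $e^k \zeta^{m-k} \le e^{-m/2}$. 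You would need to replace McDiarmid with an argument of this union-bound type (or otherwise exploit the per-transaction cap inside the tail bound, not just in the mean) to reach the advertised $e^{-\Theta(m)}$.
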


Notice that a Nash equilibrium always exists by allowing mixed strategies~\cite{nash1951non}.
Towards the proof, we introduce some notations. For simplicity, let all the transaction rewards be $p_1\ge p_2\ge\cdots\ge p_n$. By the design of our system we require that $p_1/p_n\le \Delta$. Note that the strategy of a miner is to select one transaction. We consider the general mixed strategy of a miner where he/she can specify a probability for each transaction. 

Consider an arbitrary Nash equilibrium and let $\pi^{(i)}=(\pi^{(i)}_1,$ $\pi^{(i)}_2,\cdots,\pi^{(i)}_n)$ be the strategy of miner $i$ in the equilibrium, where $\pi^{(i)}_j$ is the probability that he chooses transaction $j$. It is obvious that $\sum_{j=1}^n\pi^{(i)}_j=1$ for any $1\le i\le m$. Let $X^{(i)}_j$ be the $0$-$1$ random variable that indicates whether miner $i$ chooses transaction $j$. Then $X^{(i)}_j=1$ with probability $\pi^{(i)}_j$ and $X^{(i)}_j=0$ with probability $1-\pi^{(i)}_j$.

Consider the above Nash equilibrium. Intuitively, if only a small number of transactions are selected, then miners must have devoted their probabilities to a few transactions. Therefore, to show that a sufficient number of distinct transactions are selected in expectation, by miners, we need to show that the miners are distributing their probabilities in a fair way among transactions, as is implied by the following lemma.

\begin{lemma}\label{lemma:individual-bound}
If there exists some transaction $j_1$ such that $\sum_{i=1}^m \pi^{(i)}_{j_1} \ge  12\Delta,$
	then for every transaction $j$, it holds that $\sum_{i=1}^m \pi^{(i)}_{j}\ge  1/2$.
\end{lemma}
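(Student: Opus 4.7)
My plan is to argue by contradiction against the Nash equilibrium condition. Suppose $\sum_{i=1}^m \pi^{(i)}_{j_1} \ge 12\Delta$ but that there exists some transaction $j^*$ with $\sum_{i=1}^m \pi^{(i)}_{j^*} < 1/2$. I will exhibit a specific miner $i^\star$ currently placing positive mass on $j_1$ who strictly prefers to shift that mass onto $j^*$, contradicting the equilibrium hypothesis. Throughout, write $R_i(j)$ for miner $i$'s expected reward conditional on choosing transaction $j$, and note that the only fact about $p_1,\dots,p_n$ I use is $p_{j_1}/p_{j^*}\le\Delta$, which follows from the system's reward-spread bound.

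The first step is to upper-bound the expected reward from $j_1$ on average. The key identity is that the \emph{total} reward paid out on transaction $j_1$ is exactly $p_{j_1}$ when at least one miner picks $j_1$ and $0$ otherwise, since the probabilistic success weights $u_i/\sum_{h\in S}u_h$ sum to one on the chosen subset. Taking expectation over all miners' choices and conditioning on who picks $j_1$ gives
\[
\sum_{i=1}^m \pi^{(i)}_{j_1}\, R_i(j_1) \;=\; p_{j_1}\cdot \Pr[\text{at least one miner picks } j_1] \;\le\; p_{j_1}.
\]
Combining this with $\sum_i \pi^{(i)}_{j_1}\ge 12\Delta$ and taking a weighted minimum, there exists a miner $i^\star$ with $\pi^{(i^\star)}_{j_1}>0$ and $R_{i^\star}(j_1)\le p_{j_1}/(12\Delta)\le p_{j^*}/12$.

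The second step is a matching lower bound on $R_{i^\star}(j^*)$. Regardless of the computational powers $u_h$, miner $i^\star$ wins deterministically whenever no other miner also selects $j^*$, so
\[
R_{i^\star}(j^*) \;\ge\; p_{j^*}\cdot \prod_{h\ne i^\star}\bigl(1-\pi^{(h)}_{j^*}\bigr) \;\ge\; p_{j^*}\cdot\Bigl(1-\sum_{h\ne i^\star}\pi^{(h)}_{j^*}\Bigr) \;\ge\; p_{j^*}/2,
\]
using the elementary bound $\prod(1-x_k)\ge 1-\sum x_k$ and the standing assumption $\sum_h \pi^{(h)}_{j^*}<1/2$. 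Chaining the two bounds, $R_{i^\star}(j^*)\ge p_{j^*}/2 > p_{j^*}/12\ge R_{i^\star}(j_1)$, so $i^\star$ would strictly prefer to reallocate the mass $\pi^{(i^\star)}_{j_1}$ to $j^*$, contradicting Nash.

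The main obstacle I anticipate is handling heterogeneous computational powers $u_h$: the naive approach of bounding $R_{i^\star}(j^*)$ via Jensen's inequality on $u_{i^\star}/(u_{i^\star}+\sum_{h\ne i^\star}u_h X^{(h)}_{j^*})$ leaks a dependence on the unknown $u_h$'s that cannot be controlled from the hypothesis alone. The maneuver above sidesteps this by lower-bounding $R_{i^\star}(j^*)$ with the probability of being the \emph{unique} miner on $j^*$, which is $u$-independent. The other delicate point is the direction of the pick in the upper-bound step: we need a miner that is simultaneously in the support of $j_1$ \emph{and} suffers a low conditional payoff, which is exactly what the $\pi$-weighted averaging argument produces.
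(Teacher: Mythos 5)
Your proof is correct, and it takes a genuinely different and considerably more elementary route than the paper's. The paper also argues by contradiction via a profitable deviation, but it selects the deviating miner as the one with the \emph{smallest} computational power among those supporting $j_1$, and then must control that miner's expected winning probability $\EE[\Gamma_1]$ through a Chernoff concentration bound on the number of competitors combined with Jensen's inequality (Claim~\ref{claim:1}), and separately lower-bound the marginal gain $\EE[\tilde\Gamma_2-\Gamma_2]$ on $j_2$ via a case analysis and a quadratic-function argument (Claim~\ref{claim:2}). You sidestep all of this with two observations: (i) the accounting identity $\sum_i \pi^{(i)}_{j_1}R_i(j_1)=p_{j_1}\Pr[\text{some miner picks }j_1]\le p_{j_1}$, which holds because the competition weights $u_i/\sum_{h\in S}u_h$ sum to one, so a $\pi$-weighted averaging (pigeonhole) produces a supported miner $i^\star$ with $R_{i^\star}(j_1)\le p_{j_1}/(12\Delta)$ without ever ordering the $u_h$'s or invoking concentration; and (ii) the crude but sufficient lower bound $R_{i^\star}(j^*)\ge p_{j^*}\prod_{h\ne i^\star}(1-\pi^{(h)}_{j^*})\ge p_{j^*}/2$ from the event that $i^\star$ is alone on $j^*$, which is independent of the $u_h$'s. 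Both proofs ultimately rest on the same two ingredients --- the reward-spread bound $p_{j_1}/p_{j^*}\le\Delta$ and the Nash support condition that every pure strategy in a player's support is a best response --- but your version replaces the paper's probabilistic machinery with a deterministic averaging argument, which also makes the role of the constant $12$ transparent (any constant exceeding $2$ would do). The one thing the paper's finer per-miner bound could in principle buy is quantitative information about the weakest supporting miner specifically, but nothing of that is needed for the lemma as stated, and your argument is cleaner and avoids the notational issues in the paper's Claim~\ref{claim:1}.
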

\begin{proof} Suppose, on the contrary, the lemma is not true, that is, there exists some transactions $j_1$ and $j_2$ such that $\sum_{i=1}^m \pi^{(i)}_{j_1}\ge  12\Delta$ and $\sum_{i=1}^m \pi^{(i)}_{j_2}< 1/2$. Consider the set of miners that choose transaction $j_1$ with positive probability. For simplicity, let these miners be miner $1,2,\cdots,k$ such that $u_1\ge u_2\ge\cdots\ge u_k$. We show in the following that miner $k$ can change his strategy to get a strictly higher profit, contradicting the fact that this is a Nash equilibrium, and consequently, the lemma is proved. More precisely, we argue that player $k$ can get strictly larger profit (in expectation) by increasing his probability of choosing transaction $j_2$ and meanwhile decreasing his probability of choosing $j_1$.
	
		The expected profit that miner $k$ can get from transaction $j_1$ and $j_2$ using his current strategy is equal to
	$$p_{j_1}\EE[\Gamma_1]+p_{j_2}\EE[\Gamma_2],$$
	where for $h=1,2$, we have
	\[
	\Gamma_h=\left\{
	\begin{array}{cc}
	0,\quad\textrm{if } \sum_{i=1}^nu_iX^{(i)}_{j_h}=0\\
	\frac{u_kX_{j_h}^{(k)}}{\sum_{i=1}^m u_iX^{(i)}_{j_h}}, \quad \textrm{Otherwise.}
	\end{array}
	\right.
	\]	
	
	If $k$ changes his strategy by choosing $j_1$ with the probability of $0$ and choosing $j_2$ with the probability of $\pi^{(k)}_{j_1}+\pi^{(k)}_{j_2}$, then the expected profit he can get from $j_1$ and $j_2$ is equal to
	$p_{j_2}\EE[\tilde{\Gamma}_2],$
	where
	\[
	\tilde{\Gamma}_2=\left\{
	\begin{array}{cc}
	0,\quad\textrm{if } \sum_{i\neq k}u_iX^{(i)}_{j_2}+u_k\tilde{X}^{(k)}_{j_2}=0\\
	\frac{u_k\tilde{X}_{j_2}^{(k)}}{\sum_{i\neq k}u_iX^{(i)}_{j_2}+u_k\tilde{X}^{(k)}_{j_2}}, \quad \textrm{Otherwise.}
	\end{array}
	\right.
	\]	
	and $\tilde{X}_{j_2}^{(k)}$ is the $0$-$1$ random variable that takes the value $1$ with the probability of $\pi^{(k)}_{j_1}+\pi^{(k)}_{j_2}$.		In the following we show that
	$$p_{j_2}\EE[\tilde{\Gamma}_2-\Gamma_2]> p_{j_1}\EE[\Gamma_1],$$
	which implies the correctness of the lemma. We prove the following two claims. 
	\begin{claim}\label{claim:1}
		$\EE[\Gamma_1]< \frac{\pi_{j_1}^{(k)}}{2\Delta}.$
	\end{claim}
	
	\begin{proof}
	Let $X=\sum_{i\neq k} X_{j_1}^{(i)}$ and $\mu=\EE[X]\ge 5\Delta/\theta-\EE[X_{j_1}^{(k)}]=6\Delta/\theta-\pi_{j_1}^{(k)}$. For any $\delta\in (0,1)$, we have
	\begin{eqnarray*}
		\EE[\Gamma_1]=&&\EE[\Gamma_1|X> (1-\delta)\mu]\Pr(X>(1-\delta)\mu)\\&&+\EE[\Gamma_1|X\le  (1-\delta)\mu]\Pr(X\le (1-\delta)\mu)
	\end{eqnarray*}
	Given that $\Gamma_1\le 1$, we know that
	\begin{eqnarray*}
		&&\EE[\Gamma_1|X\le  (1-\delta)\mu]\\
		&=&\EE[\Gamma_1|X\le  (1-\delta)\mu, X_{j_1}^{(k)}=1]\Pr(X_{j_1}^{(k)}=1)\\&&+\EE[\Gamma_1|X\le  (1-\delta)\mu, X_{j_1}^{(k)}=0]\Pr(X_{j_1}^{(k)}=0)\\
		&=& \pi_{j_1}^{(k)}\cdot\EE[\Gamma_1|X\le  (1-\delta)\mu, X_{j_1}^{(k)}=1]\le \pi_{j_1}^{(k)}.
	\end{eqnarray*}
	Meanwhile, by $u_1\ge u_2\ge \cdots\ge u_k$, we have
	$$\Gamma_1\le \frac{X_{j_1}^{(k)}}{\sum_{i=1}^k X_{j_1}^{(i)}}.$$
	According to Chernoff bound, we know that
	$$\Pr(X\le (1-\delta)\mu)\le e^{\frac{-\delta^2\mu}{2}}.$$	
	
	Therefore,
	\begin{eqnarray*}
		\EE[\Gamma_1]&=&\EE[\Gamma_1|X> (1-\delta)\mu]\Pr(X>(1-\delta)\mu)\\&&+\EE[\Gamma_1|X\le  (1-\delta)\mu]\Pr(X\le (1-\delta)\mu)\\
		&\le & \EE[\frac{X_{j_1}^{(k)}}{(1-\delta)\mu+X_{j_1}^{(k)}}]+ e^{\frac{-\delta^2\mu}{2}}\cdot \pi_{j_1}^{(k)}
	\end{eqnarray*}
	Consider the function $\varphi(x)=\frac{x}{(1-\delta)\mu+x}$, it is easy to verify that $-\varphi(x)$ is convex in $x\in (0,+\infty)$, therefore, by Jensen's inequality 
	\begin{eqnarray*}
	\frac{\EE[X_{j_1}^{(k)}]}{(1-\delta)\mu+\EE[X_{j_1}^{(k)}]}=\varphi(\EE[X_{j_1}^{(k)}])&\ge& \EE[\varphi(X)]\\&=&\EE[\frac{X_{j_1}^{(k)}}{(1-\delta)\mu+X_{j_1}^{(k)}}].
	\end{eqnarray*}
	Now consider the function $xe^{-x}$. It is easy to verify that the function decreases when $x\ge 1$, therefore $e^{-x}\le \frac{1}{ex}$ for $x\ge 1$, hence, for $\frac{\delta^2\mu}{2}\ge 1$, we have
	\begin{eqnarray*}
	\EE[\Gamma_1]&\le& \frac{\EE[X_{j_1}^{(k)}]}{(1-\delta)\mu+\EE[X_{j_1}^{(k)}]}+e^{\frac{-\delta^2\mu}{2}}\cdot\pi_{j_1}^{(k)}\\&\le& \frac{\pi^{(k)}_{j_1}}{(1-\delta)\mu+\pi^{(k)}_{j_1}}+\frac{2}{e\delta^2\mu}\cdot \pi_{j_1}^{(k)}.	
	\end{eqnarray*}

	Using  $\mu\ge 12\Delta-\pi_{j_1}^{(k)}\ge 11\Delta$ and taking $\delta=1/2$, we have
	$$\EE[\Gamma_1]\le\frac{\pi^{(k)}_{j_1}}{6\Delta}+\frac{2\pi_{j_1}^{(k)}}{e\cdot 1/4\cdot 11\Delta}< \frac{\pi_{j_1}^{(k)}}{2\Delta}. \hspace{20mm}\qedhere$$

\end{proof}

	\begin{claim}\label{claim:2}
		$\EE[\tilde{\Gamma}_2-\Gamma_2]\ge \pi_{j_1}^{(k)}/2.$
	\end{claim}
	
	\begin{proof}
	Let $Y=\sum_{i\neq k} X_{j_2}^{(i)}$. Then
	\begin{eqnarray*}
	\Pr(Y=0)=\Pr(X_{j_2}^{(i)}=0,\forall i\neq k)&=&\prod_{i\neq k} (1-\pi_{j_2}^{i})\\&\ge& 1-\sum_{i\neq k}\pi_{j_2}^{(i)}\\&\ge& 1-(1/2-\pi_{j_2}^{(k)})\\&=&1/2+\pi_{j_2}^{(k)}.	
	\end{eqnarray*}		
	Note that $X_{j_2}^{(k)}$ and $Y$ are independent, hence we have the following,
	\begin{eqnarray*}
		&&\EE[\tilde{\Gamma}_2-\Gamma_2]\\&\ge& \EE[\tilde{\Gamma}_2-\Gamma_2|Y=0,\tilde{X}_{j_2}^{(k)}>0]\Pr(Y=0,\tilde{X}_{j_2}^{(k)}>0)\\
		&\ge& (\pi_{j_1}^{(k)}+\pi_{j_2}^{(k)})(1/2+\pi_{j_2}^{(k)})\cdot\EE[\tilde{\Gamma_2}-\Gamma_2|Y=0,\tilde{X}_{j_2}^{(k)}>0].	
	\end{eqnarray*}
	Further notice that $\tilde{X}_{j_2}^{(k)}$ and $X_{j_2}^{(k)}$ are independent, thus
	\begin{eqnarray*}
		&&\EE[\tilde{\Gamma}_2-\Gamma_2|Y=0,\tilde{X}_{j_2}^{(k)}>0]\\
		&=&\EE[\tilde{\Gamma}_2-\Gamma_2|Y=0,\tilde{X}_{j_2}^{(k)}>0,X_{j_2}^{(k)}=0]\\&&\cdot\Pr(X_{j_2}^{(k)}=0|Y=0,\tilde{X}_{j_2}^{(k)}>0)\\
		&&+\EE[\tilde{\Gamma}_2-\Gamma_2|Y=0,\tilde{X}_{j_2}^{(k)}>0,X_{j_2}^{(k)}>0]\\&&\cdot\Pr(X_{j_2}^{(k)}>0|Y=0,\tilde{X}_{j_2}^{(k)}>0)\\
		&=& \EE[\frac{u_k\tilde{X}_{j_2}^{(k)}}{Y+u_k\tilde{X}_{j_2}^{(k)}}|Y=0,\tilde{X}_{j_2}^{(k)}>0,X_{j_2}^{(k)}=0]\cdot\Pr(X_{j_2}^{(k)}=0)\\
		&&+\EE[\frac{u_k\tilde{X}_{j_2}^{(k)}}{Y+u_k\tilde{X}_{j_2}^{(k)}}-\frac{u_k{X}_{j_2}^{(k)}}{Y+u_k{X}_{j_2}^{(k)}}|Y=0,\tilde{X}_{j_2}^{(k)}>0,X_{j_2}^{(k)}>0]\\&&\cdot\Pr(X_{j_2}^{(k)}>0)\\
		&=& 1-\pi_{j_2}^{(k)}
	\end{eqnarray*}
	Now we have
	\begin{eqnarray*}
		\EE[\tilde{\Gamma}_2-\Gamma_2]\ge (\pi_{j_1}^{(k)}+\pi_{j_2}^{(k)})(1/2+\pi_{j_2}^{(k)})(1-\pi_{j_2}^{(k)})	
	\end{eqnarray*}
	For simplicity, let $\tau=\pi_{j_1}^{(k)}+\pi_{j_2}^{(k)}\in [0,1]$ and $x=\pi_{j_2}^{(k)}$, we have
	$$f(x)=\EE[\tilde{\Gamma}_2-\Gamma_2]-1/2\cdot \pi_{j_1}^{(k)}=\tau(1/2+x)(1-x)-1/2\cdot (\tau-x).$$
	Notice that $f$ is a quadratic function in $x$ whose quadratic term has negative coefficient, therefore if $f(0)\ge 0$ and $f(1)\ge 0$, then for any $x\in [0,1]$ we have $f(x)\ge 0$. It is easy to see that $f(0)= 0$, and $f(1)=1/2\cdot(1-\tau)\ge 0$, hence, the claim is proved.
\end{proof}
	
Given the two claims and the fact that $p_{j_2}\le \Delta p_{j_1}$, $p_{j_2}\EE[\tilde{\Gamma}_2-\Gamma_2]> p_{j_1}\EE[\Gamma_1]$ follows and the lemma is proved. The proof of the two claims is quite involved. 
\end{proof}	

Lemma~\ref{lemma:individual-bound} shows that: Either no transaction has received a total amount of probability that is larger than $12\Delta$, or every transaction receives a total amount of probability at least $1/2$. Note that the two cases are not mutually exclusive. Nevertheless, we show in the following that in both cases, miners will select sufficiently many transactions with very high probability. The proofs of the following lemmas are mathematically involved.

\begin{lemma}\label{le:le12delta}
	If $\sum_{i=1}^m \pi_{j}^{(i)}< 12\Delta$ holds for every transaction $j$, and $\Delta\le m/12$, then the probability that only $k\le \frac{m}{12e\Delta}=\Theta(m/\Delta)$ different transactions are selected by $m$ miners is at most $(1/e)^{\Theta(m)}$.
\end{lemma}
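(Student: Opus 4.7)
The plan is to combine a union bound over size-$k$ subsets of transactions with an AM-GM inequality bounding the probability that all miners' choices concentrate in a fixed subset. Let $Z$ denote the number of distinct transactions selected. If $Z\le k$, then some size-$k$ subset $S$ of transactions contains every miner's chosen transaction, so by a union bound
$$\Pr[Z\le k] \;\le\; \binom{n}{k}\cdot \max_{|S|=k}\prod_{i=1}^m a_i,$$
where $a_i := \sum_{j\in S}\pi^{(i)}_j$ is the probability that miner $i$ chooses within $S$; the product uses independence of the miners' choices.

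Next I would apply AM-GM: $\prod_i a_i \le \bigl(\frac{1}{m}\sum_i a_i\bigr)^m$. Swapping the order of summation gives $\sum_i a_i = \sum_{j\in S}\sum_i \pi^{(i)}_j$, and the hypothesis $\sum_i \pi^{(i)}_j < 12\Delta$ together with $|S|=k$ bounds this by $12\Delta k$. Hence $\prod_i a_i \le (12\Delta k/m)^m$. Setting $k = \lfloor m/(12e\Delta)\rfloor$ makes $12\Delta k/m \le 1/e$, so the inner bound becomes $e^{-m}$. Multiplying by $\binom{n}{k}\le (en/k)^k$, whose logarithm $\frac{m}{12e\Delta}\log\bigl(\frac{12 e^2 n\Delta}{m}\bigr)$ is $o(m)$ when $\Delta$ is treated as a constant and $n$ is polynomial in $m$, yields $\Pr[Z\le m/(12e\Delta)] \le (1/e)^{\Theta(m)}$ as claimed. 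The side condition $\Delta \le m/12$ in the hypothesis ensures the cap $q_j < 12\Delta$ is a non-trivial constraint, since $q_j := \sum_i \pi^{(i)}_j \le m$ holds automatically.

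The main obstacle is the AM-GM step, where the hypothesis cap $\sum_i \pi^{(i)}_j < 12\Delta$ is essential: it prevents any single transaction from absorbing excessive probability mass, which keeps each $a_i$ small on average across miners and forces the product to decay exponentially in $m$. Without this cap, a large cluster of miners could place nearly all their probability on a single transaction inside $S$, making $\prod_i a_i$ bounded away from zero and killing the bound. The remaining ingredients — the union bound, independence, the binomial estimate, and the substitution of $k$ — are routine bookkeeping.
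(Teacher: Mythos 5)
Your overall skeleton (restrict to a size-$k$ subset $S$, use independence to write the probability that all miners land in $S$ as $\prod_i\sum_{j\in S}\pi^{(i)}_j$, then apply AM--GM and the cap $\sum_i\pi^{(i)}_j<12\Delta$ to force the inner quantity down to $e^{-m}$) matches the paper's proof up to the last step. But there is a genuine gap in how you aggregate over subsets: you use the crude union bound $\Pr[Z\le k]\le\binom{n}{k}\cdot\max_{|S|=k}(\cdots)$ and then need $\log\binom{n}{k}=o(m)$, which you obtain only by \emph{adding the hypothesis} that $n$ is polynomial in $m$ (and $\Delta$ constant). The lemma has no such hypothesis: $n$ is the number of available transactions in the pool and can be arbitrarily large relative to $m$, in which case $\binom{n}{k}e^{-m}$ is vacuous. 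This is not routine bookkeeping --- it is the point where the naive union bound genuinely fails.

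The paper closes this gap with a separate combinatorial inequality (Lemma~\ref{lemma:ineq}): writing $a_j=\frac{1}{m}\sum_i\pi_j^{(i)}$, so that $\sum_j a_j=1$ and $a_j\le\delta\approx 12\Delta/m$, it bounds the \emph{full sum} $\sum_{|S|=k}\bigl(\sum_{j\in S}a_j\bigr)^m$ rather than $\binom{n}{k}$ times its maximum. A rearrangement/convexity argument shows the sum is maximized when the unit mass is concentrated on only $1/\delta$ coordinates each equal to $\delta$, which replaces $\binom{n}{k}$ by $\binom{1/\delta}{k}$ and yields the $n$-independent bound $e^k\zeta^{m-k}\le e^{-m/2}$ with $\zeta=k\delta\le e^{-1}$. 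Intuitively, subsets $S$ consisting mostly of transactions with negligible $a_j$ contribute essentially nothing to the sum, and the constraint $\sum_j a_j=1$ means only about $m/(12\Delta)$ transactions can carry non-negligible mass; your max-times-count bound discards exactly this structure. To repair your argument without restricting $n$, you would need to prove something equivalent to Lemma~\ref{lemma:ineq}.
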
 	

Towards the proof of Lemma~\ref{le:le12delta}, we need the following Lemma~\ref{lemma:ineq}.

\begin{lemma}\label{lemma:ineq}
	Let $k,1/\delta\in\mathbb{N}_+$ such that $k\delta=\zeta\in (0,1)$. Let $a_1,a_2,\cdots,a_n$ be $n$ numbers such that $a_i\in [0,\delta]$, $\sum_{i=1}^na_i=1$. Let $\mathcal{A}=\{S|S\subseteq\{1,2,\cdots,n\}, |S|=k\}$. Then  we have
	$$\sum_{S\in\A}(\sum_{j\in S} a_j)^m\le e^k\zeta^{m-k}.$$
\end{lemma}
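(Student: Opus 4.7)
The plan is to exploit convexity and reduce to the extremal points of the feasible polytope. For every fixed $S \in \A$, the map $a \mapsto (\sum_{j \in S} a_j)^m$ is convex because $x^m$ is convex for $m \ge 1$, so $f(a) := \sum_{S \in \A}(\sum_{j \in S} a_j)^m$ is convex in $a$. Hence $f$ attains its maximum over the compact convex polytope $P = \{a \in \mathbb{R}^n : 0 \le a_i \le \delta,\ \sum_i a_i = 1\}$ at an extreme point of $P$. Since $1/\delta$ is a positive integer, a short case analysis shows that the extreme points of $P$ are exactly the $0/\delta$-indicator vectors with $N := 1/\delta$ coordinates equal to $\delta$ and the remaining $n - N$ coordinates equal to $0$. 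It therefore suffices to bound $f$ at such a vertex.

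Fix a vertex whose heavy coordinates form a set $T$ with $|T| = N$. Then $\sum_{j \in S} a_j = \delta\,|S \cap T|$ for each $S \in \A$, and the dominant contribution (from subsets $S \subseteq T$, i.e.\ $|S \cap T| = k$) evaluates to $\binom{N}{k}(k\delta)^m = \binom{N}{k}\zeta^m$. The Stirling-type estimate $\binom{N}{k} \le N^k/k! \le (eN/k)^k = (e/\zeta)^k$ (using $k! \ge (k/e)^k$) then yields
$$\binom{N}{k}\zeta^m \le (e/\zeta)^k \zeta^m = e^k \zeta^{m-k},$$
matching the claimed bound.

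The main obstacle is disposing of the subdominant contributions from subsets $S \in \A$ with $|S \cap T| = l < k$, which together contribute $\delta^m \sum_{l=1}^{k-1}\binom{N}{l}\binom{n - N}{k - l} l^m$. When $n$ is not substantially larger than $N$ these can be controlled term by term by a direct comparison with the $l = k$ summand, but for general $n$ one must argue more carefully, either by interpreting $\A$ as effectively restricted to the support of $a$ or by exploiting that the factor $l^m$ strongly penalises $l < k$ when $m \ge k$. I expect the bulk of the technical work to lie in making this absorption rigorous and uniform in $n$, so that the constant $e^k$ in the final bound is not inflated by the subdominant terms.
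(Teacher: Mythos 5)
Your reduction to the vertices of the polytope is the paper's own argument in different packaging: the paper performs an explicit pairwise mass shift ($a_i\mapsto a_i+\epsilon$, $a_j\mapsto a_j-\epsilon$) and uses convexity of $x\mapsto x^m$ to show each shift can only increase $f$, which is precisely the ``maximum of a convex function over a polytope is attained at a vertex'' principle you invoke. Both arguments land on the same configuration with $N=1/\delta$ coordinates equal to $\delta$, and both then bound $\binom{N}{k}\zeta^m$ by $(e/\zeta)^k\zeta^m=e^k\zeta^{m-k}$ via $k!\ge (k/e)^k$. Up to that point your proposal is correct and matches the paper.

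The obstacle you flag at the end is a genuine gap in your write-up --- you never dispose of the subdominant terms --- but be aware that the paper does not dispose of them either: it asserts $f(a')=\binom{1/\delta}{k}(k\delta)^m$ as an equality, which silently discards every $S\in\mathcal{A}$ not contained in the support of $a'$ and is an equality only when $n=1/\delta$. Moreover, your suspicion that the absorption is delicate is an understatement: it cannot be carried out uniformly in $n$, because the lemma as stated is false. Take $k=2$, $\delta=1/3$ (so $N=3$, $\zeta=2/3$), $m=3$, $n=100$, and $a=(1/3,1/3,1/3,0,\dots,0)$; then $f(a)=3\cdot(2/3)^3+3\cdot 97\cdot(1/3)^3=105/9\approx 11.7$, whereas $e^k\zeta^{m-k}=e^2\cdot(2/3)\approx 4.9$. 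In general the terms with $|S\cap T|=l<k$ contribute on the order of $n^{k-l}(l\delta)^m$, which for fixed $m$ grows without bound in $n$. So the statement needs an additional hypothesis --- for instance restricting $\mathcal{A}$ to $k$-subsets of the support of $a$, or requiring $n$ to be at most exponential in $m-k$ --- and the downstream application in Lemma~\ref{le:le12delta}, where the number of transactions $n$ is not bounded in terms of $m$, inherits the same issue. Your proof cannot be completed as the statement stands; the correct resolution is to repair the lemma, not to sharpen the estimate.
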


\begin{proof}
	Without loss of generality we assume that $a_1\ge a_2\ge \cdots a_n$. We define
	$$f(a_1,a_2,\cdots,a_n)=\sum_{S\in\mathcal{A}}(\sum_{j\in S} a_j)^m.$$
	For arbitrary $i<j$ and some small $\epsilon\ge 0$ such that $a_{i-1}\ge a_i+\epsilon$, $a_j-\epsilon\ge a_{j+1}$, we prove in the following that
	$$f(\cdots,a_{i-1},a_i+\epsilon,a_{i+1},\cdots,a_{j-1},a_{j}-\epsilon,a_{j+1},\cdots)\ge f(a_1,a_2,\cdots,a_n)$$
	For simplicity we write $f(a_1,a_2,\cdots,a_n)=f_{i,j}(a_i,a_j)$. We have
	\begin{eqnarray*}
		&& f_{i,j}(a_i+\epsilon,a_j-\epsilon)\\
		&=&\sum_{\substack{S:S\in\A\\ i,j\not\in S}}(\sum_{h\in S} a_h)^m+\sum_{\substack{S:S\in\A\\ i\in S,j\not\in S}}(a_i+\epsilon+\sum_{h\in S,h\neq i} a_h)^m\\&&+\sum_{\substack{S:S\in\A\\ i\not\in S,j\in S}}(a_j-\epsilon+\sum_{h\in S,h\neq j} a_h)^m+\sum_{\substack{S:S\in\A\\ i,j\in S}}(\sum_{h\in S} a_h)^m
	\end{eqnarray*}
	Consider the function $g(x)=x^m$. Due to its convexity, for $x+\epsilon>x>y>y-\epsilon$, we have
	$$g(x+\epsilon)+g(y-\epsilon)\ge g(x)+g(y).$$
	Hence, let $\A'=\{S|S\subseteq\{1,2,\cdots,n\}\setminus\{i,j\}, |S|=k-1\}$, then for any $S'\in \A'$, we have
	\begin{eqnarray*}
	&&(\sum_{h\in S'}a_h+a_i+\epsilon)^m+(\sum_{h\in S'}a_h+a_j-\epsilon)^m\\&\ge& (\sum_{h\in S'}a_h+a_i)^m+(\sum_{h\in S'}a_h+a_j)^m,
	\end{eqnarray*}
	therefore, $f_{i,j}(a_i+\epsilon,a_j-\epsilon)\ge f_{i,j}(a_i,a_j)$.  	
	Now we can iteratively change $a_1,a_2,\cdots,a_n$ into $a_1',a_2',\cdots,a_n'$ such that $a_i'=\delta$ for $1\le i\le 1/\delta$, and $a_i'=0$ otherwise, and get
	$$f(a_1,a_2,\cdots,a_n)\le f(a_1',a_2',\cdots,a_n').$$
	It is not difficult to compute that
	$$f(a_1',a_2',\cdots,a_n')={{1/\delta}\choose{k}}(k\delta)^m\le \frac{(1/\delta)^k}{k!}\cdot \zeta^m\le (\frac{e}{\zeta})^k\zeta^m,$$
	where for the last inequality we make use of the Stirling's approximation that $k!\ge (k/e)^k$. Hence, the lemma is proved.
\end{proof}

Now we come to the proof of Lemma~\ref{le:le12delta}.

\begin{proof}
	Consider the event that at most $k$ different transactions are chosen by $m$ miners. Let $\A=\{S|S\subseteq\{1,2,\cdots,n\}|,
	|S|=k\}$ be the superset of all subsets of cardinality $k$. For any $S\in \A$, let $\sigma_S$ be any mapping that maps $\{1,2,\cdots,m\}$ to $S$, and $\Omega_S$ be the set of all such mappings. Consider the event that miner $i$ selects transaction $\sigma_S(i)$, we know that the event happens with the probability
	$\Pr(\sigma_S)=\prod_{i=1}^m\pi_{\sigma_S(i)}^{(i)}.$
	Taking summation over all possible mappings, the event that only transactions in $S$ are selected is at most
	$$\sum_{\sigma_S\in\Omega_S}\Pr(\sigma_S)=\sum_{\sigma_S\in\Omega_S}\prod_{i=1}^m\pi_{\sigma_S(i)}^{(i)}=\prod_{i=1}^m \sum_{j\in S}\pi^{(i)}_j.$$
	The last equality follows as when we expand $\prod_{i=1}^m \sum_{j\in S}\pi^{(i)}_j$, any summand corresponds to $\prod_{i=1}^m\pi_{\sigma(i)}^{(i)}$ for some mapping $\sigma$. Using the the inequality of arithmetic and geometric means, we have
	{%
	$$\sum_{\sigma_S\in\Omega_S}\Pr(\sigma_S)=\prod_{i=1}^m \sum_{j\in S}\pi^{(i)}_j\le (\frac{\sum_{i=1}^m\sum_{j\in S}\pi_j^{(i)}}{m})^m=(\sum_{j\in S}\frac{\sum_{i=1}^m\pi_j^{(i)}}{m})^m.$$}	
	Taking summation over all possible $S\in \A$, the event that at most $k$ transactions are selected is at most
	$$\sum_{S\in\A}\sum_{\sigma_S\in\Omega_S}\Pr(\sigma_S)\le \sum_{S\in\A}(\sum_{j\in S}\frac{\sum_{i=1}^m\pi_j^{(i)}}{m})^m. $$
	
	Now let $a_j=\frac{\sum_{i=1}^m\pi_j^{(i)}}{m}$. It is easy to see that $\sum_{j=1}^na_j=1$. Furthermore, since $\sum_{i=1}^m\pi_{j}^{(i)}<12\Delta$, we have $a_j\le 12\Delta/m$. We apply Lemma~\ref{lemma:ineq} by taking $\delta=\frac{1}{\lceil m/(12\Delta)\rceil}\le 12\Delta/(m+12\Delta)$, $\zeta=k\delta\le e^{-1}$, we know that
	$$\sum_{S\in\A}\sum_{\sigma_S\in\Omega_S}\Pr(\sigma_S)\le e^k\zeta^{m-k}\le e^{-m/2}. $$
	Hence, the lemma is proved. 	
\end{proof}




Note that if $\Delta>m/12$, $\frac{m}{12e\Delta}<1$. As miners complete at least $1$ transaction, the lemma is trivially true.Now we consider the other case and have the following.

\begin{lemma}\label{le:ge 1}
	If $\sum_{i=1}^m\pi_j^{(i)}\ge 1/2$ holds for every transaction $j$, then the probability that no more than $n/e^2$ transactions are selected is at most $(1/e)^{\Theta(n)}$.	
\end{lemma}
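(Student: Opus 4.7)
The plan is to mimic the union-bound argument used in the proof of Lemma~\ref{le:le12delta}, but exploiting the \emph{lower} bound $\sum_{i=1}^m\pi_j^{(i)} \ge 1/2$ instead of an upper bound. Let $Z$ denote the number of distinct transactions selected by the $m$ miners, and observe that the event $\{Z \le k\}$ is exactly the union, over all $k$-subsets $S \subseteq \{1,\dots,n\}$, of the event that every miner picks a transaction in $S$. Union-bounding and then applying AM-GM to the inner product exactly as in Lemma~\ref{le:le12delta} yields
$$\Pr(Z \le k) \;\le\; \sum_{|S|=k} \prod_{i=1}^m \sum_{j\in S}\pi_j^{(i)} \;\le\; \sum_{|S|=k}\Bigl(\sum_{j \in S} a_j\Bigr)^m,$$
where $a_j := \tfrac{1}{m}\sum_{i=1}^m \pi_j^{(i)}$; note $\sum_{j=1}^n a_j = 1$, and the hypothesis forces $a_j \ge 1/(2m)$ for every $j$.

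Next I would use this uniform lower bound to control $\sum_{j\in S}a_j$ through its complement: since the $n-k$ indices outside $S$ each contribute at least $1/(2m)$,
$$\sum_{j \in S} a_j \;=\; 1 - \sum_{j \notin S} a_j \;\le\; 1 - \frac{n-k}{2m}.$$
Combining with the standard inequality $(1-x)^m \le e^{-mx}$ for $x\in[0,1]$ gives the clean bound
$$\Pr(Z \le k) \;\le\; \binom{n}{k}\,e^{-(n-k)/2}.$$

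Finally, I would substitute $k = n/e^2$ and use $\binom{n}{k}\le (en/k)^k$, which at this value of $k$ is $(e^3)^{n/e^2} = e^{3n/e^2}$. The resulting bound is
$$\Pr\!\left(Z \le n/e^2\right) \;\le\; \exp\!\Bigl(\,n\bigl[\tfrac{7}{2e^2} - \tfrac{1}{2}\bigr]\Bigr),$$
and since $\tfrac{1}{2} - \tfrac{7}{2e^2}$ is a positive absolute constant (roughly $0.026$), this is $(1/e)^{\Theta(n)}$, as desired. The main obstacle is purely quantitative: one has to sanity-check that the binomial blowup $(e^3)^{n/e^2}$ does not swamp the exponential decay $e^{-n(1-1/e^2)/2}$, and this is precisely the reason the lemma is stated with the threshold $n/e^2$ rather than any larger fraction of $n$.
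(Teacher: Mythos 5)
Your proposal is correct and follows essentially the same route as the paper: a union bound over the candidate sets of selected transactions, AM--GM to reduce the product $\prod_i\sum_{j\in S}\pi_j^{(i)}$ to $\bigl(\sum_{j\in S}a_j\bigr)^m$, the hypothesis applied to the complement of $S$ to force exponential decay, and the $\binom{n}{k}\le (en/k)^k$ estimate at $k=n/e^2$ — the paper merely indexes by the avoided set of size $(1-\theta)n$ rather than the hit set of size $\theta n$, and arrives at the identical exponent $n\bigl(\tfrac{7}{2e^2}-\tfrac12\bigr)\approx -0.026\,n$.
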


\begin{proof}
	Consider the event that at most $\theta n$ transactions are selected by $m$ miners for some $\theta\in (0,1)$. Note that in this case $m=\sum_{j=1}^n\sum_{i=1}^m\pi_j^{(i)}\ge n/2$, hence $n\le 2m$. Again we define $\A=\{S|S\subseteq\{1,2,\cdots,n\}|,|S|=(1-\theta) n\}$ as the superset of all the subsets of cardinality $(1-\theta n)$. The probability that miner $i$ does {\em not} select any transaction in some $S\in \A$ is $1-\sum_{j\in S}\pi_j^{(i)}$. Given that miners select transactions independently, the probability that all miners do not select transactions in $S$ is
	\begin{eqnarray*}
	\prod_{i=1}^m(1-\sum_{j\in S}p_j^{(i)})&\le& (\frac{m-\sum_{j\in S}\sum_{i=1}^m\pi_j^{(i)}}{m})^m\\ &\le& (\frac{m-\frac{(1-\theta)n}{2}}{m})^m\le e^{-\frac{(1-\theta)n}{2}},	
	\end{eqnarray*}
	where the first inequality follows by inequality of arithmetic and geometric means, the second inequality follows by the fact that $\sum_{i=1}^m\pi_j^{(i)}\ge 1/2$, and the third inequality follows by $(1-1/x)^x\le e^{-1}$ for $x\ge 1$.	Taking the summation over all possible $S\in\A$, the probability that at most $\theta n$ transactions are selected is at most
	\begin{eqnarray*}
	\sum_{S\in\A}\prod_{i=1}^m(1-\sum_{j\in S}p_j^{(i)})&\le& {{n}\choose{\theta n}}e^{-\frac{(1-\theta)n}{2}}\\&\le& \frac{n^{\theta n}}{(\theta n)!}\cdot e^{-\frac{(1-\theta)n}{2}}\\&\le& (\frac{e}{\theta})^{\theta n}\cdot e^{-\frac{(1-\theta)n}{2}}.	
	\end{eqnarray*}
	Taking $\theta=e^{-2}$, simple calculation shows that the right side of the inequality above is $e^{cn}$ for $c\le -0.02$, and the lemma is proved.
\end{proof}

Given Lemma~\ref{le:le12delta} and Lemma~\ref{le:ge 1}, Theorem~\ref{thm:poa-ub} follows directly.


\subsubsection{Finality-duration}
For ease of presentation, we let $s=\min\{c_1m/\Delta,c_2n\}$ (recall Theorem~\ref{thm:poa-ub}). We will characterize finality-duration in terms of $s$. Recall that a miner needs to make two decisions: 
\begin{inparaenum}[(i)]
	\item which transaction to include in the new block,  and
	\item which two previous blocks to refer to.
\end{inparaenum}
The two decisions are independent. In the previous subsection we have discussed (i), and in this subsection we focus on (ii), as this affects how the DAG grows.

It should be clear that since the verification reward of a transaction (block) is evenly distributed among miners who append a block of the same distance to it, a miner always prefers a block with no descendants. At any particular time $t$, we call a block without descendants as a leaf at $t$, and denote by $L_t$ the set of leaves. We are interested in the size of $L_t$. Notice that in the classical blockchain system, $|L_t|$ is 1 since it is a chain. However, in a non-linear model, $|L_t|$ is not necessarily 1. Principally, $|L_t|$ could grow arbitrarily large, but what we will show in this section is that, $|L_t|$ is always bounded when miners are using their equilibrium strategies. In this case, although we are considering a non-linear model, it is ``almost linear", as implied by Theorem~\ref{thm:universal-bound}. Based on this result, we further leverage the techniques from random walk to prove that, for every block, after a delay of $O(s\log s)$ units of time, all blocks will be its descendant (Theorem~\ref{thm:verify}), consequently, if we set $\ell = \Theta(s\log s)$ in our design, every block will be verified by all users, and security follows.  

As we mentioned before, each new block will refer to two leaves in $L_t$. As every block offers the same total amount of verification reward, every leaf appears the same to the miners (unless they are in conflict with previous blocks and then miners will be biased based on the LWD rule). Therefore, a new block will randomly select two leaves to refer to. Assuming leaves are not conflicting with previous blocks, we show that $|L_t|$ will be $O(s)$ in the long run with an extremely high probability. First, it is easy to see that if $|L_t|\le s$, then $L_{t+1}\ge s$ as the $s$ new blocks will be leaves at $t+1$. The following lemma shows that if $|L_t|$ is sufficiently large, then with very high probability it will reduce to $O(s)$ after enough time. 

\begin{lemma}\label{lemma:pre-walk}
	Let $\epsilon$ be an arbitary small constant. If $|L_t|\ge 1/\epsilon^3$ and $|L_t|\ge 4s$, then with sufficiently high probability (at least $1-O(\epsilon)$), $|L_{t+1}|=|L_t|-X+s\le |L_t|-\frac{(1-3\epsilon)s}{2}$, i.e., $L_t$ decreases by at least $\Omega(s)$.
\end{lemma}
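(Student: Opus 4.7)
The plan is to decompose $|L_{t+1}|$ via the random references of the $s$ new blocks and then apply concentration. Set $L := |L_t|$. Each of the $s$ new blocks picks two distinct old leaves uniformly at random from $L_t$; let $X$ denote the number of old leaves referenced by at least one of these blocks, so that $|L_{t+1}| = L - X + s$. The conclusion reduces to showing $X \ge (3 - 3\epsilon)s/2$ with probability at least $1 - O(\epsilon)$.

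\textbf{Step 1 (Expectation).} By linearity over old leaves, $\EE[X] = L\bigl(1 - (1 - 2/L)^s\bigr)$, since a single block misses leaf $i$ with probability $\binom{L-1}{2}/\binom{L}{2} = 1 - 2/L$ and the block choices are independent. Using the elementary bounds $(1-x)^s \le e^{-sx}$ and $1 - e^{-t} \ge t - t^2/2$, one obtains $\EE[X] \ge 2s - 2s^2/L$, and the hypothesis $L \ge 4s$ then yields $\EE[X] \ge 3s/2$.

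\textbf{Step 2 (Concentration).} I plan to use two complementary bounds depending on the size of $s$. For the small-$s$ regime, the collision count $Y := 2s - X$ satisfies $\EE[Y] \le 2s^2/L \le 2s^2\epsilon^3$ by invoking $L \ge 1/\epsilon^3$; Markov's inequality then gives $\Pr(Y > (1+3\epsilon)s/2) \le 4s\epsilon^3/(1+3\epsilon) = O(s\epsilon^3)$, which is $O(\epsilon)$ whenever $s = O(1/\epsilon^2)$. For the larger-$s$ regime, $X$ is a function of the $s$ independent block choices with bounded differences of magnitude $2$ (altering one block's chosen pair changes $X$ by at most $2$), so McDiarmid's inequality yields $\Pr(X < \EE[X] - 3\epsilon s/2) \le \exp(-9\epsilon^2 s/8)$, which is $O(\epsilon)$ whenever $s = \Omega(\epsilon^{-2}\log\epsilon^{-1})$. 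The two regimes overlap, so every admissible pair $(L,s)$ permitted by the hypotheses falls into at least one of them with failure probability $O(\epsilon)$, and combining with Step 1 gives $X \ge \EE[X] - 3\epsilon s/2 \ge (3-3\epsilon)s/2$.

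\textbf{Main obstacle.} The main obstacle is matching the two concentration regimes so that together they cover the entire admissible parameter range without leaving a gap: the hypothesis $L \ge 1/\epsilon^3$ is essential in the small-$s$ case (it drives collision counts down), while $L \ge 4s$ both secures $\EE[X] \ge 3s/2$ and allows McDiarmid to be applied meaningfully in the large-$s$ case. A fully unified Chebyshev argument using the variance bound $\mathrm{Var}(X) \le Lp(1-p) \le 2s$ (which follows from negative pairwise correlation of the leaf-indicators, an independently checkable identity) yields only $O(1/(\epsilon^2 s))$ and is adequate only when $s = \Omega(1/\epsilon^3)$; it cannot close the full gap alone, which is why the two-regime split above seems necessary.
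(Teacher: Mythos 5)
Your decomposition and expectation computation coincide with the paper's: the paper likewise sets $X=\sum_i X_i$ where $X_i$ indicates that leaf $\tau_i\in L_t$ acquires a descendant, computes $\EE[X]=|L_t|\bigl(1-(1-2/|L_t|)^s\bigr)\ge 2s-2s(s-1)/|L_t|\ge 3s/2$ from $|L_t|\ge 4s$, and then concentrates. Where you differ is the concentration step, and there your treatment is in one respect more careful than the paper's: the paper invokes a ``Chernoff bound'' on $\sum_i X_i$ even though the $X_i$ are dependent (repairable via negative association, which the paper does not mention), whereas your McDiarmid bound over the $s$ independent block choices is rigorous as stated.

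However, your claim that ``the two regimes overlap'' is false, and this is the one genuine gap. The Markov/collision bound gives failure probability $O(s\epsilon^3)$, which is $O(\epsilon)$ only for $s\le C\epsilon^{-2}$; McDiarmid gives $e^{-9\epsilon^2 s/8}$, which is $O(\epsilon)$ only for $s\ge \tfrac{8}{9}\epsilon^{-2}\ln\epsilon^{-1}$. For $s$ in between, say $s=\epsilon^{-2}\sqrt{\ln\epsilon^{-1}}$, the first bound is $\Theta\bigl(\epsilon\sqrt{\ln\epsilon^{-1}}\bigr)$ and the second is $e^{-\Theta(\sqrt{\ln\epsilon^{-1}})}$, neither of which is $O(\epsilon)$. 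The repair is to stop discarding the hypothesis $|L_t|\ge \epsilon^{-3}$ after Step 1: for $s\le \tfrac{1}{8}\epsilon^{-3}$ it gives $\EE[X]\ge 2s-2s^2\epsilon^3\ge (3-3\epsilon)s/2+s/4$, so the deviation to be ruled out is $\Omega(s)$ rather than $\Omega(\epsilon s)$, and your own variance bound $\mathrm{Var}(X)\le 2s$ together with Chebyshev yields failure probability $O(1/s)=O(\epsilon)$ for all $s=\Omega(\epsilon^{-1})$; Markov covers $s=O(\epsilon^{-1})$ and McDiarmid covers $s\ge\tfrac{1}{8}\epsilon^{-3}$, which closes the window. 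Note finally that the paper only ever applies the lemma with $s>\epsilon^{-3}$ (Theorem~\ref{thm:universal-bound}), a regime in which your McDiarmid bound alone gives $e^{-\Omega(1/\epsilon)}=O(\epsilon)$ (and in which the paper's own stated probability $1-2e^{-1/(3\epsilon)}$ implicitly lives), so for the paper's downstream purposes your argument already suffices.
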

\begin{proof}
	Consider an arbitrary $L_t$. For any $\tau_i\in L_t$ and any new block $b$, the probability that $b$ refers to $\tau_i$ is $2/|L_t|$, hence, the probability that none of the $s$ new blocks refer to $\tau_i$ is  $(1-2/|L_t|)^s$, i.e., the probability that an arbitrary $\tau_i$ has descendant(s) at $t+1$ is  $\theta=1-(1-2/|L_t|)^s.$	Let a random variable $X_i$ denote whether the event that  $\tau_i$ has descendant(s) at $t+1$, then $\Pr(X_i=1)=\theta$ and $\Pr(X_i=0)=1-\theta$. Denote by 
	$X=\sum_{i=1}^{|L_t|} X_i$
	the total number of leaves in $L_t$ that has descendant(s) at $t+1$. According to Chernoff bound, with the probability of at least $1-2e^{-\epsilon^2|\EE X|/3}$, $(1-\epsilon)\EE X\le X\le (1+\epsilon)\EE X$. Now we estimate $\EE X$ by verifying
	$$2s-\frac{2s(s-1)}{|L_t|} \le \EE X=|L_t|-|L_t|(1-\frac{2}{|L_t|})^s\le 2s.$$
	Hence, if $|L_t|\ge 1/\epsilon^3$ and $|L_t|\ge 4s$, then  (at least $1-2e^{-1/(3\epsilon)}=1-O(\epsilon)$), $X\ge (1-\epsilon)\EE X\ge 3/2\cdot (1-\epsilon)s$, consequently,
	$$|L_{t+1}|=|L_t|-X+s\le |L_t|-\frac{(1-3\epsilon)s}{2}.\hspace{20mm}\qedhere$$ 
\end{proof}

The above lemma shows that if $|L_t|$ is large, then with high probability $|L_t|$ shall decrease, however, what we are interested in is the probability that $|L_t|\le O(s)$ for all $t\ge 0$. Towards this, we need to cast the problem as a {\em random walk}. Lemma~\ref{lemma:pre-walk} shows that with the probability of $(1-O(\epsilon))^3=1-O(\epsilon)$, $|L_t|$ can decrease by $\frac{3(1-3\epsilon)s}{2}\ge s$, while with probability of at most $O(\epsilon)$, $|L_t|$ can increase by at most $s$. This can be interpreted as a random walk which walks right (increase) by $s$ steps with the probability of $1-O(\epsilon)$, and walks left (decrease) by $s$ steps with the probability of $O(\epsilon)$. The following lemma is proved for a general random walk.

\begin{lemma}[\cite{feller1968introduction}, pp.272]\label{lemma:feller} Consider a random walk starting at $RW_0=0$, $\Pr(RW_{i+1}-RW_i=s)=p$, $\Pr(RW_{i+1}-RW_i=-s)=q$ where $p+q=1$ and $s\in \mathbb{Z}_{>0}$. If $p>q$, then
	$$\lim_{n\to\infty}\Pr(RW_i\ge 0,\,\forall 1\le i\le n)=\frac{p-q}{p}.$$
	If $p< q$, the above limit is $0$.
\end{lemma}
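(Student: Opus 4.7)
The plan is to reduce the claim to the classical first-passage analysis for a biased $\pm 1$ random walk and then invoke the strong Markov property. First I would observe that the step size $s$ is irrelevant: dividing positions by $s$ turns the walk into a standard simple random walk on $\mathbb{Z}$ with up-probability $p$ and down-probability $q$, and the event $\{RW_i \ge 0 \text{ for all } 1 \le i \le n\}$ is preserved. So it suffices to prove the claim for $s=1$.

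Next I would set $q_1 := \Pr(\text{the walk ever reaches } -1 \mid RW_0 = 0)$. The sequence of events $E_n = \{RW_i \ge 0, \forall 1 \le i \le n\}$ is decreasing in $n$, so by continuity of measure $\lim_n \Pr(E_n) = \Pr(\bigcap_n E_n) = 1 - q_1$. Thus the whole problem boils down to computing $q_1$. I would do this by first-step analysis: if the first step is $-1$ (probability $q$) the event is already realized; if it is $+1$ (probability $p$) the walk sits at $+1$, and by the strong Markov property the probability of reaching $-1$ from $+1$ is the probability of hitting $0$ from $+1$ (equal to $q_1$ by spatial homogeneity) times the probability of hitting $-1$ from $0$ (again $q_1$). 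This yields the quadratic equation
\begin{equation*}
q_1 = q + p\, q_1^2,
\end{equation*}
whose roots are $q_1 = 1$ and $q_1 = q/p$, using $1 - 4pq = (p-q)^2$.

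The crux of the argument, and the only nontrivial step, is selecting the correct root. When $p > q$ I would appeal to the strong law of large numbers: $RW_n / n \to p - q > 0$ almost surely, so $RW_n \to +\infty$ almost surely, which means the walk can only visit any fixed level $-1$ finitely often and in particular $\Pr(\text{never reach } -1) > 0$, forcing $q_1 < 1$ and hence $q_1 = q/p$. Therefore $\lim_n \Pr(E_n) = 1 - q/p = (p-q)/p$, as claimed. When $p < q$ the same SLLN gives $RW_n \to -\infty$ almost surely, so the walk certainly dips below $0$ and $q_1 = 1$, making the limit $0$.

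The main obstacle is the root-selection step, since the quadratic analysis alone does not distinguish the two cases $p > q$ and $p < q$. A fully rigorous discrimination requires either the SLLN argument sketched above or, alternatively, a generating-function/martingale argument (e.g., using the martingale $(q/p)^{RW_n}$ and optional stopping on the exit time from $\{-1, N\}$, then sending $N \to \infty$) to conclude that $q_1$ equals the smaller root when the drift is positive and the larger root when the drift is negative. All remaining steps are routine bookkeeping.
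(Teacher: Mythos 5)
The paper does not prove this lemma at all --- it is imported verbatim from Feller (with a citation to p.~272), so there is no in-paper argument to compare against. Your proof is the standard and correct self-contained derivation: the rescaling by $s$ is harmless since the event $\{RW_i\ge 0\}$ is invariant under dividing positions by $s$; the monotone sequence of events $E_n$ gives $\lim_n\Pr(E_n)=1-q_1$ with $q_1$ the probability of ever hitting $-1$; and first-step analysis plus the strong Markov property yields $q_1=q+pq_1^2$ with roots $1$ and $q/p$. One small point of rigor in the root-selection step: from $RW_n\to+\infty$ a.s.\ it does not \emph{immediately} follow that $\Pr(\text{never reach }-1)>0$ (a.s.\ finiteness of $\min_n RW_n$ is not the same as the minimum being nonnegative with positive probability). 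The clean way to finish is the iteration you implicitly rely on: if $q_1=1$ then by spatial homogeneity and the strong Markov property the walk would a.s.\ reach $-k$ for every $k$, forcing $\liminf_n RW_n=-\infty$ a.s., contradicting $RW_n\to+\infty$; hence $q_1<1$ and $q_1=q/p$. You correctly identify root selection as the crux and offer the optional-stopping alternative with the martingale $(q/p)^{RW_n}$, which also works. The $p<q$ case via the SLLN is fine. Overall the proposal is correct and complete up to that one-line patch.
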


Now we are ready to prove the following theorem.

\begin{theorem}\label{thm:universal-bound}
	Let $\epsilon$ be a small constant such that $s>1/\epsilon^3$.	With very high probability (at least $1-O(\epsilon)$), $|L_t|\le 5s$ for all $t\ge 0$.
\end{theorem}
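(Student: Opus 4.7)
The plan is to bound $|L_t|$ by coupling its evolution with a biased random walk and then appealing to Lemma~\ref{lemma:feller}. Two preliminary observations drive the argument. First, since $|L_{t+1}| = |L_t| - X + s$ with $X \ge 0$, the leaf set can grow by at most $s$ per step, so any first passage above $5s$ must occur from a state with $|L_t| > 4s$. Second, the hypothesis $s > 1/\epsilon^{3}$ ensures $4s > 1/\epsilon^{3}$, so whenever $|L_t| \ge 4s$ both preconditions of Lemma~\ref{lemma:pre-walk} hold, and the process decreases by at least $(1-3\epsilon)s/2$ in the next step with probability at least $1 - O(\epsilon)$.

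Next I would decompose time into excursions above the threshold $4s$. Each such excursion begins at some $|L_{\sigma}| \in [4s, 5s)$ (the prior step, starting from below $4s$, can lift the level by at most $s$), so to breach $5s$ we need the offset $|L_{\sigma+j}| - 4s$ to rise from below $s$ to at least $s$. I would couple this offset with a biased simple random walk $R_j$ on $s\mathbb{Z}$ with $\Pr(R_{j+1} - R_j = +s) = q = O(\epsilon)$ and $\Pr(R_{j+1} - R_j = -s) = p = 1 - q$, dominating $|L_{\sigma+j}| - 4s$ from above; the factor $(1-3\epsilon)/2$ in the downward step of the true process is absorbed by taking the coupled downward step conservatively (or by aggregating a constant number of consecutive steps of $|L_t|$ into one step of $R_j$).

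Given this coupling, Lemma~\ref{lemma:feller} applied to $-R_j$ (which has downward drift $p - q > 0$) gives that the walk stays nonpositive forever with probability $(p-q)/p = 1 - O(\epsilon)$; equivalently, by the classical gambler's-ruin identity, starting at $R_0 = 0$ the probability that $R_j$ ever reaches level $s$ is at most $q/p = O(\epsilon)$. Combined with the coupling and the observation that the countably many excursions can be concatenated into a single reflected walk (so one invocation of Feller covers the entire infinite trajectory rather than being applied per excursion), this yields $\Pr(|L_t| \le 5s \text{ for all } t \ge 0) \ge 1 - O(\epsilon)$.

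The main obstacle I expect is setting up the coupling cleanly. The chain $|L_t|$ is state-dependent and its step distribution has asymmetric magnitudes (at most $+s$ up, at least $(1-3\epsilon)s/2$ down) rather than the symmetric $\pm s$ of the walk in Lemma~\ref{lemma:feller}; moreover, its behavior below the threshold $4s$ is essentially arbitrary. Handling these asymmetries without losing the $1-O(\epsilon)$ guarantee — in particular, arranging that the infinite-horizon bound is a single application of Feller rather than an accumulating union bound over the (potentially infinitely many) excursions — is the delicate technical step that the rest of the proof must address.
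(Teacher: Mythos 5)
Your proposal follows essentially the same route as the paper: use Lemma~\ref{lemma:pre-walk} to get a downward drift of order $s$ whenever $|L_t|\ge 4s$ (noting that the first crossing of $4s$ lands in $[4s,5s)$ because steps increase by at most $s$), then dominate the process by a biased $\pm s$ random walk and invoke Lemma~\ref{lemma:feller} to get the infinite-horizon bound $1-O(\epsilon)$. The coupling and excursion-concatenation issues you flag at the end are real, but the paper's own proof does not resolve them either --- it simply applies Lemma~\ref{lemma:feller} once from the first hitting time $t^*$ of $[4s,5s)$ --- so your sketch is, if anything, more explicit about the delicate step than the published argument.
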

\begin{proof}
	Recall that $|L_0|=0$.	Let $t^*$ be the smallest time where $|L_{t^*}|\ge 4s$, then $|L_{t^*}|\le 5s$. Now we take $t^*$ as a starting time, $|L_{t^*}|$ as a starting point and take the random walk interpretation. Using Lemma~\ref{lemma:feller}, we have that
	\begin{eqnarray*}
	\lim_{n\to\infty}\Pr(|L_{t}|\le |L_{t^*}|, \forall 1\le t\le n)&\le& \frac{1-O(\epsilon)-O(\epsilon)}{1-O(\epsilon)}\\&=&1-O(\epsilon).
	\end{eqnarray*}
	Therefore, the probability that $|L_t|$ is bounded by $5s$ for all $t\ge 0$ is at least $1-O(\epsilon)$.
\end{proof}

\begin{lemma}\label{lemma: descendant}
	Let $\epsilon$ be a small constant such that $s>1/\epsilon^3$.	For any transaction at $t$ that is not in conflict with prior transactions, with sufficiently high probability (at least $1-O(\epsilon)$) every block appended at or after $t+O(s\log s)$ will be its descendant.
\end{lemma}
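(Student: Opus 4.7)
The plan is to show that after $O(s\log s)$ time units, every leaf in the DAG is a descendant of $B$; once this is achieved, every subsequently appended block references two such leaves and is therefore itself a descendant of $B$, and the property is preserved thereafter. For $\tau\geq t$, let $G_\tau\subseteq L_\tau$ denote the set of leaves that are descendants of $B$ (including $B$ itself), and write $g_\tau=|G_\tau|$, $n_\tau=|L_\tau|-g_\tau$. Since miners in equilibrium reference leaves uniformly at random, a new block appended at step $\tau+1$ is good (a descendant of $B$) iff at least one of its two references lies in $G_\tau$, which gives
\[
\EE[g_{\tau+1}\mid g_\tau,|L_\tau|]=g_\tau\bigl(1-2/|L_\tau|\bigr)^s+s\bigl(1-(1-g_\tau/|L_\tau|)^2\bigr).
\]
I condition throughout on the event of Theorem~\ref{thm:universal-bound}, under which $s\leq |L_\tau|\leq 5s$ for all $\tau$ in the relevant window with probability $1-O(\epsilon)$ (the lower bound follows from the $s$ freshly appended blocks at every step).

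I would split the evolution into two phases. In the \emph{growth phase} ($g_\tau\leq|L_\tau|/2$), starting from $g_t\geq 1$ (since $B$ itself is a leaf at time $t$), I analyze the descendant-fraction $r_\tau=g_\tau/|L_\tau|$: direct substitution into the expected recurrence together with $|L_\tau|\in[s,5s]$ and $(1-2/|L_\tau|)^s\in[e^{-2},e^{-2/5}]$ shows $\EE[r_{\tau+1}]\geq (1+\kappa)r_\tau$ for some absolute constant $\kappa>0$. In the \emph{shrinkage phase} ($n_\tau\leq|L_\tau|/2$), the symmetric recurrence
\[
\EE[n_{\tau+1}\mid n_\tau,|L_\tau|]=n_\tau\bigl(1-2/|L_\tau|\bigr)^s+s\bigl(n_\tau/|L_\tau|\bigr)^2
\]
satisfies $\EE[n_{\tau+1}]\leq(1-\kappa')n_\tau$ for some $\kappa'\in(0,1)$ by the same estimate. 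Each phase takes $O(\log s)$ time steps, and a multiplicative Chernoff bound on the $s$ independent block choices per step (union-bounded across those $O(\log s)$ steps) upgrades the one-step expectations into high-probability per-step guarantees of the same form with slightly worse constants. A final Markov-style estimate then forces $n_\tau=0$ with probability $1-O(\epsilon)$ once $\EE[n_\tau]$ has decayed below $\epsilon$.

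Once $n_\tau=0$, every leaf of $L_\tau$ is a descendant of $B$, so every block appended from that moment on is a descendant, and the property is preserved forever because the next generation of leaves consists of old unreferenced good leaves and the newly appended good blocks. Since each time step advances by $s$ new blocks, the total elapsed time from $t$ is $O(s\log s)$ units, matching the claim.

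The main obstacle will be the start of the growth phase, when $g_\tau$ may be as small as $1$ or $2$: the multiplicative expected growth applies to the ratio $r_\tau$ while $g_\tau$ itself can temporarily halve in a single step (the structural bound $g_{\tau+1}\geq g_\tau/2$ holds because each block picks at most two leaves, so among every two good leaves that get picked at least one forces a distinct good new block). Here standard Chernoff concentration is far too weak, so I would handle this initial stretch by a direct constant-probability argument over a constant number of steps, showing that with probability bounded away from zero $g_\tau$ strictly increases, and repeating until $g_\tau$ exceeds a sufficiently large absolute threshold $c_0$ beyond which the multiplicative Chernoff analysis applies cleanly. The very last few steps that drive $n_\tau$ exactly to zero are handled analogously by a direct probability estimate boosted across $O(\log(1/\epsilon))$ independent trials.
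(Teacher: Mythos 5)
Your proposal is sound in outline but takes a genuinely different route from the paper. The paper fixes the block $\tau_0$ and tracks only the \emph{expectation} $\psi_h=\EE|\Psi_h|$ of the number of leaves reachable from it, derives the per-round recurrence $\psi_{h+1}\ge\bigl(1+\tfrac{1}{|L_h|-1}\bigr)\psi_h$, and concludes that $O(s\log s)$ rounds suffice because the growth factor is only $1+\Theta(1/s)$; it never performs the two-phase split, the bootstrap from $g=1$, or any concentration argument beyond invoking Theorem~\ref{thm:universal-bound}. Your analysis of the descendant \emph{fraction} $r_\tau$, followed by the symmetric contraction of the complement $n_\tau$, yields constant multiplicative progress per round and hence only $O(\log s)$ rounds, which is strictly sharper. (Your final unit conversion --- ``each time step advances by $s$ new blocks, so the elapsed time is $O(s\log s)$'' --- conflates rounds with blocks: in the paper's convention one round is one unit of time during which $s$ blocks are appended, so you are actually proving an $O(\log s)$ delay. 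Either way the claimed $O(s\log s)$ bound is met.) What the paper's coarser recurrence buys is brevity; what yours buys is a quantitatively stronger bound plus an honest treatment of the two places where a pure-expectation argument is insufficient, namely the start (where $g_\tau=O(1)$ and Chernoff is useless) and the endgame (driving $n_\tau$ to exactly $0$) --- neither of which the paper addresses at all.

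Two points deserve care if you execute this. First, the constant-factor growth genuinely holds only for the ratio $r_\tau$, not for $g_\tau$ itself: with $|L_\tau|=5s$ and $r_\tau$ small one gets $\EE[g_{\tau+1}]\approx g_\tau\bigl(e^{-2/5}+\tfrac{2}{5}\bigr)\approx 1.07\,g_\tau$ which is fine, but for $r_\tau$ near $1/2$ the factor $\beta+\tfrac{s}{|L_\tau|}(2-r_\tau)$ can dip below $1$, and only division by the simultaneously shrinking $\EE|L_{\tau+1}|=|L_\tau|\beta+s$ restores a uniform constant gain; since $|L_{\tau+1}|$ is itself random, you need its concentration (which your per-step Chernoff over the $s$ choices does supply) before passing to $\EE[r_{\tau+1}]$. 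Second, your recurrence uses independent-with-replacement references, $(1-g_\tau/|L_\tau|)^2$, where the model picks two distinct leaves, $\binom{|L_\tau|-g_\tau}{2}/\binom{|L_\tau|}{2}$; the discrepancy is $O(1/s)$ and harmless, but should be stated. With those caveats your argument is at least as rigorous as the paper's own proof and reaches the same (indeed a stronger) conclusion.
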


\begin{proof}
	According to Theorem~\ref{thm:universal-bound}, we focus on the event that $|L_t|\le 5s$ for all $t\ge 0$, which happens with $1-O(\epsilon)$ probability.	
	
	For $h\ge t$, let $\Psi_h$ be the subset of blocks in $L_{h}$ which has a directed path from some fixed block $\tau_0\in L_t$, which is a random subset. Let $\psi_h=\EE(|\Psi_h|)$. Consider $L_{h+1}$. For any block $\tau_i\in L_{h+1}$, let $X_i$ be a binary random variable indicating whether $\tau_i$ refers to some block in $\Psi_h$, and hence admits a directed path from $\tau_0$. Then we know 
	\begin{eqnarray*}
	\Pr(X_i=1)&=&\frac{{{|\Psi_h|}\choose{2}}+|\Psi_h|(|L_h|-|\Psi_h|)}{{{|L_h|}\choose{2}}}\\&=&\frac{|\Psi_h|(2|L_h|-|\Psi_h|-1)}{|L_h|(|L_h|-1)}.
	\end{eqnarray*}
	
	We consider $|\Psi_{h+1}|$. It is obvious that if $|\Psi_h|=|L_h|$, then every block in $L_{h+1}$ refers to some block in $\Psi_h$ and thus admits a directed path from $\tau_0$, hence, $|L_{h+1}|=|\Psi_{h+1}|$, and similarly we have $|L_{h+j}|=|\Psi_{h+j}|$ for all $j\ge 1$. 	Otherwise, we assume $1\le |\Psi_h|\le |L_h|-1$. Then $2|L_h|-|\Psi_h|-1\ge |L_h|$, and we have
	$$\EE(X_i)=\EE\left(\frac{|\Psi_h|(2|L_h|-|\Psi_h|-1)}{|L_h|(|L_h|-1)}\right)\ge \frac{\psi_h}{|L_h|-1}.$$
	Note that $|\Psi_{h+1}|=\sum_i X_i$. It is easy to calculate that $$\psi_{h+1}=\EE(|\Psi_{h+1}|)\ge \psi_h\left(1+\frac{1}{|L_h|-1}\right).$$
	
	This means, starting from $\psi_t=1$, for each $\psi_h$ where $h\ge t$, either $\psi_h=|L_h|$ and thus $\psi_{h'}=|L_{h'}|$ for all $h'\ge h$, or $\psi_{h+1}\ge \left(1+\frac{1}{|L_h|-1}\right)\psi_{h}$. Since $|L_h|\le 5s$,  $\psi_h$ increases  sufficiently close to $|L_h|\le 5s$ when $h\ge t+O(s\log s)$, and the theorem is proved. 	
\end{proof}

Given the above lemma, if we set $\ell$, the verification depth to be $\ell\ge O(s\log s)$, then any transaction at $t$ will be verified by all the users after $O(s\log s)$ units of time with high probability. The following theorem is thus true.
\begin{theorem}\label{thm:verify}
	If $s>1/\epsilon^3$ and $\ell\ge O(s\log s)$, then with probability of at least $1-O(\epsilon)$, any transaction at $t$ will be verified by all the users after $O(s\log s)$ units of time.
\end{theorem}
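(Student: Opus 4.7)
The plan is to derive Theorem~\ref{thm:verify} as a near-direct corollary of Lemma~\ref{lemma: descendant} combined with the verification rule in $\mathcal{A}$. First, I will invoke Lemma~\ref{lemma: descendant}: with probability at least $1 - O(\epsilon)$, every block appended at or after time $t + O(s\log s)$ is a descendant of the target transaction $\tau$, and the inductive growth of $\psi_h$ used inside that lemma's proof shows the directed distance from $\tau$ to any such descendant block is also $O(s\log s)$.

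Next, by rule $\mathcal{A}$, whenever a miner appends a new block $B$ it must verify every block in $\bigcup_{k=1}^{\ell}\textit{Anc}(B,k)$. Choosing $\ell \ge O(s\log s)$ to exceed the distance bound above forces $\tau$ to lie in this ancestor set for every block appended at time $t + O(s\log s)$ or later. Hence any miner that contributes a block in the relevant window automatically verifies $\tau$.

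To promote this from ``some miners verify $\tau$'' to ``all users verify $\tau$,'' I would argue that across a window of length $O(s\log s)$ immediately following $t + O(s\log s)$, each of the $m = O(s\Delta)$ users produces at least one block with overwhelming probability. Theorem~\ref{thm:poa-ub} delivers $\Omega(s)$ blocks per unit of time, so $\Omega(s^2\log s)$ blocks are produced in the window, giving each user $\Omega(s\log s / \Delta)$ blocks in expectation (weighted by computational power $u_i$). A Chernoff-style concentration bound, analogous to those used in Lemma~\ref{le:le12delta} and Lemma~\ref{le:ge 1}, then upgrades this expectation to the high-probability conclusion that every user appends at least one descendant block of $\tau$, and thus performs the required verification.

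The hardest step is the combined failure analysis: one must compose the $1-O(\epsilon)$ failure probability from Lemma~\ref{lemma: descendant} (which itself rests on Theorem~\ref{thm:universal-bound}) with the new failure event that some user does not append a block in the window, and also handle edge cases where the LWD rule $\mathcal{E}$ discards blocks along the descendant path. Under $s > 1/\epsilon^3$ each individual failure event has probability $O(\epsilon)$, so a final union bound delivers the stated $1 - O(\epsilon)$ guarantee.
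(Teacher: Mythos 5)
Your main route coincides with the paper's: Theorem~\ref{thm:verify} is derived as an immediate corollary of Lemma~\ref{lemma: descendant} combined with the verification rule in $\mathcal{A}$ and the choice $\ell\ge O(s\log s)$ --- the paper's entire ``proof'' is the one sentence preceding the theorem statement, so your first two steps are exactly its argument (and your remark that one must bound the \emph{distance} from $\tau$ to later blocks by $\ell$, not merely establish that a directed path exists, is a point the paper glosses over; the one-step-per-round growth of the DAG is the right justification). Where you diverge is the third step, in which you try to close the gap from ``the producer of every later block verifies $\tau$'' to ``\emph{all} users verify $\tau$.'' The paper simply elides this, implicitly assuming every user keeps participating and hence eventually appends (or attempts to append) a block whose depth-$\ell$ ancestor set contains $\tau$. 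Your attempt to make it rigorous is well-motivated, but the specific argument does not go through as written: Theorem~\ref{thm:poa-ub} lower-bounds the number of \emph{distinct transactions} selected per round, not the number of blocks won by any individual miner, and mining success is allocated in proportion to the $u_i$ under the equilibrium strategies, so a miner with vanishing computational power may win no block at all during an $O(s\log s)$ window; no Chernoff bound in the style of Lemmas~\ref{le:le12delta} and~\ref{le:ge 1} yields ``every user appends at least one block w.h.p.'' To get the ``all users'' conclusion you either need the paper's implicit participation assumption, or you should note that verification under $\mathcal{A}$ is performed when a user \emph{constructs} a candidate block (selecting two leaves and checking their ancestors), which every active user does each round whether or not it wins the competition. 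With that reading, your extra machinery is unnecessary and the theorem follows exactly as in the paper.
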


\para{Remark.} Recall that the scalability of the system increases as $\Delta$ increases, while $s=\min\{c_1m/\Delta,c_2n\}$, and hence the finality-duration $O(s\log s)$ decreases as $\Delta$ increases. Theorem~\ref{thm:verify} shows trade-off between the scalability and finality-duration. 



\section{Conclusion}\label{sec-conclusion}
We provide the first systematic analysis on blockchain systems with respect to three major parameters, verification, scalability, and finality-duration. We establish an impossibility result showing no blockchain system can simultaneously achieve the three properties. We complement the existing blockchain systems by establishing the first NLB that achieves both full verification and scalability. We also reveal, for the first time, the trade-off between scalability and finality-duration in NLB. It is not clear whether a better trade-off exists or not. 

\clearpage

\balance
\bibliographystyle{abbrv}
\bibliography{reference}
\clearpage

\end{document}